\definecolor{mygreen}{RGB}{28,172,0} 
\definecolor{mylilas}{RGB}{170,55,241}
\newtheorem{definition}{Definition}
\newtheorem{lemma}{Lemma}
\newtheorem{proposition}{Proposition}
\newtheorem{claim}{Claim}
\newcommand{\Rmnum}[1]{\expandafter\@slowromancap\romannumeral #1@}
  \pgfplotsset{compat=newest}
\begin{document}
%
\title{Optimized Portfolio Contracts for Bidding the Cloud}

\author{Yang Zhang, Arnob Ghosh, and Vaneet Aggarwal
\thanks{Y. Zhang, A. Ghosh, and V. Aggarwal are with the School of Industrial Engineering, Purdue University, West Lafayette IN 47907, email: \{zhan1925, ghosh39, vaneet\}@purdue.edu. }}

\maketitle


\begin{abstract}

Amazon EC2 provides two most popular pricing schemes--i) the {\em costly} on-demand instance where the job is guaranteed to be completed, and ii) the {\em cheap} spot instance where a job may be interrupted. We consider a user can select a combination of on-demand and spot instances to finish a task. Thus he needs to find the optimal  bidding price for the spot-instance, and the portion of the job to be run on the on-demand instance. We formulate the problem as an optimization problem and seek to find the optimal solution. We consider three bidding strategies: one-time requests with expected guarantee and one-time requests with penalty for incomplete job and violating the deadline, and persistent requests. Even without a penalty on incomplete jobs, the optimization problem turns out to be non-convex. Nevertheless, we show that the portion of the job to be run on the on-demand instance is at most half. If the job has a higher execution time or smaller deadline, the bidding price is higher and vice versa. Additionally, the user never selects the on-demand instance if the execution time is smaller than the deadline.

The numerical results  illustrate the sensitivity of the effective portfolio to several of the parameters involved in the model. Our empirical analysis on the Amazon EC2 data  shows that our strategies can be employed on the real instances, where the expected total cost of the proposed scheme decreases over 45\%  compared to the baseline strategy.

\end{abstract}

\begin{IEEEkeywords}
Cloud pricing, spot instance, on-demand instance, optimization.
\end{IEEEkeywords}

%
\IEEEpeerreviewmaketitle

\section{Introduction}

\IEEEPARstart{C}{loud}  computing is projected to increase to \$162 billion in 2020. The latest quarterly results released from Amazon shows that Amazon Web Services (AWS) realized 43\% year-to-year growth, making contribution to 10\% of consolidated revenue and 89\% of consolidated operating income \cite{web3}. However, the success story of the CSPs inherently depends on the user's participation. The cloud service provider's (CSP's) prices affect the users' behavior and the profit of the CSP.  CSPs provide different pricing plans to meet customers' service requirements which we describe in the following.

\subsection{Cloud Pricing Schemes}

The most popular pricing schemes broadly adopted are: usage-based pricing, auction-based pricing, and volume-discount pricing \cite{lz16}.
Among the above, the most popular ones are the usage based and the auction-based pricing.  In the usage-based pricing, which is also known as pay-as-you-go, asks a fixed price per instance per hour and remains constant (static) over a long time. This type of pricing scheme is commonly implemented in Amazon \cite{web}, Google \cite{google}, Windows Azure \cite{azure}, etc. For example, Amazon EC2 on-demand instance provides fixed price short term service with no up-front payment or long-term commitment. The user will certainly get the resource on on-demand instance\cite{web}.  

On the contrary, in the auction-based pricing ({\it e.g.}, Amazon EC2 spot pricing),  users bid for the service, and the CSP sets a dynamic threshold to decide the successful bids based on the demand and bids. In each time slot, the bids that are above the spot price (which is decided by the CSP)  will be accepted, and others will be rejected.  The users pay the spot price\footnote{Hence, it has the similarity with the generalized second price auction.}. Although a user can bid a relatively lower price for the spot instance compared to the price it has to pay for the on-demand instance, the job may be interrupted when the bid is below a threshold \cite{web}. Typical job types like word counting, multimedia processing, etc. can be run using auction-based pricing strategies.

There are two types of spot instance requests: one-time requests and persistent requests. Specifically, the user bids with the instance type, bid price, etc., and the instance will start when the bid is higher than the spot price. When the user's bid price is lower than the spot price, the job will be interrupted and action taken afterwards relies on the request type: the interrupted job will be resumed when the bid price is above the spot price again if it is a persistent request and will be terminated permanently otherwise (i.e., if it is a one-time request). Figure \ref{fig20} depicts this procedure.

\begin{figure}[htbp]
\centering
\includegraphics[scale=1]{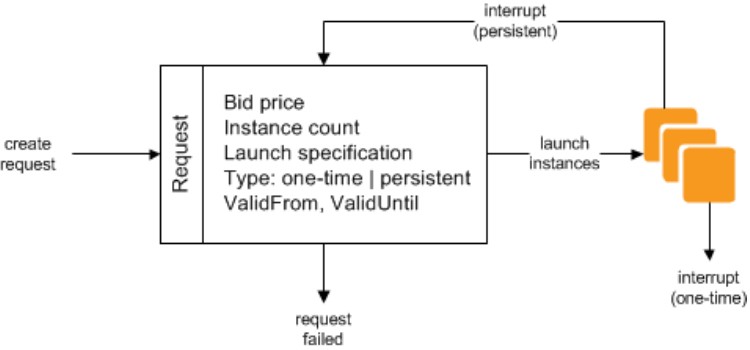}
\caption{Spot Instance Requests \cite{web2}  }
\label{fig20}
\end{figure}

\subsection{Research Challenges and Contributions}
The user will be likely to distribute its job over on-demand and spot instances. This is because in the on-demand instance, the user will be able to complete the job. In the spot market, the job may be interrupted. However, the user can pay less. Most of the existing literature considers the profit maximizing spot pricing from the CSP's perspective \cite{jin,feng,wang13,xu13,sha12}. It is also imperative to investigate the optimal decision of the users. The users needs to select a price to maximize in the spot-instance. The user also needs to select the portion to be run on the on-demand instance. In this paper, we propose a method that enables the users to decide how to make decisions in order to minimize the expected cost  while completing the job within the deadline. 

\begin{color}{black}
The closest work to ours is \cite{lz15}, which is motivated by Amazon EC2's auction-based spot pricing, they modeled the CSP's spot price setting and derive user's optimal bidding strategies. However, in the cloud infrastructure, more and more cloud jobs are requested for data analysis, such as web logs analysis, weather forecast analysis, finance  analysis, scientific simulation, etc. Most of them have hard deadlines, which can be predefined by the companies, or application providers. Failing to do meet the deadline, it may incur a penalty\cite{dli16}. For example, weather prediction is carried out by exploiting complex mathematical models based on the historical data such as temperature, atmospheric pressure,  humidity, etc. The Environmental Modeling Center runs the Global Forecast System model for 16 days into the future \cite{brown08}. If the computing job misses the deadline, some extreme weather may not be predicted in time, awareness and preparedness for the severe weather will be missed, resulting in large loss of human life,  social welfare, and financial resources if the severe weather happens. \cite{lz15} did not consider the deadline or the penalty incurred when a job misses the deadline. However, the deadline and penalty considered above impacts the bidding prices.

We consider that the users can access both the on-demand instances and the spot instances. On-demand instances exploit  the pay-as-you-go pricing scheme which guarantees the availability of the instances and there is no interruption of the job. Unlike on-demand instance, the spot instance uses the auction-based pricing scheme. The user bids for the spot instances, but its job will be interrupted when its bid is below the current spot price \cite{jin}. Thus, there is no guarantee that a job can be finished before its deadline if a user selects the spot instance. However, the user is likely to pay less for the spot instance. We consider that a job can be split into independent chunks, which can be processed on different machines in a parallel manner. We have seen many similar workloads in the real world, for example, word counting, multimedia processing, etc. \cite{lz16, wang155, tan12}.  

A user possessing the jobs, which can be run parallelly, may want to know whether a combination of on-demand and spot instances can be used to minimize the total cost while finishing the job before deadline. The user now has to select the portion of job to be completed via the on-demand instance and the spot instance. Additionally, the user has to select the bidding price for the participation in the spot instance \footnote{Though we consider the cloud computing market, our model can be applied to other markets. For example, in the  Display Advertising market of Internet, the spots are allocated in a two-stage process. In the first market, the publisher (e.g., Google's DoubleClick, OpenX, and Yahoo!'s Right Media) promises to deliver a contracted number of impressions within a fixed time slots (over a day); the second market (spot market) runs an auction to allocate the displays in every time frame (in an hour), where the advertisers arrive and bid for the displays \cite{chen}. The spot market is operated if an advertiser requires certain spots in the current time frame.  Thus, the advertiser has to select how much to bid in the spot market, and how much to buy fixed impressions in the first market.}. We propose an economic portfolio model for computing the optimal behaviors when it comes to how to allocate the job with known fixed deadlines to on-demand and spot instances and how much to bid for the spot instance  \footnote{Our approach can be applied in a MapReduce setting. Suppose we fix the number of instances $M$ to run for each job apriori. We need to split each job into two sub-jobs, and decide whether to run on spot or on-demand instance. Each sub-job will be run on $M$ instances.}.

 We suppose that each job has a fixed deadline, and a fixed execution time, which is the total time required to complete the job without any interruption. For example, suppose a job requires 30 minutes to finish. If it starts and gets interrupted after 10 minutes, we still need 20 minutes to execute the job. We consider two request mechanisms: one-time requests and persistent requests. Recall that with a one-time request, if a user's job is interrupted, it will not be resumed on the spot instance. Thus, the user's job may not be finished before the deadline by placing one-time requests. We consider two bidding strategies in one-time requests. The first one considers the user wants to finish its job before the deadline in an expected sense, and we denote this strategy as {\em one-time requests with expected guarantee} (OTR-EG) (Section \ref{wo-penalty}). However, it may not pay a penalty if it is incomplete or misses the deadline. Subsequently, we consider a strategy, where a user pays a penalty if the job is incomplete or misses the deadline, and we denote this strategy as {\em one-time requests with penalty} (OTR-P) (Section \ref{wi-penalty}). Finally, we consider the bidding strategy by placing persistent requests, and denote it as {\em persistent request} (PR) (Section \ref{sec:pers}), where the interrupted job can be resumed when the bid price is higher than the spot price again.\end{color}

\begin{color}{black} Our analysis shows that, in terms of one-time requests, {\em when the deadline is smaller than the execution time, the user should select the on-demand instances. The optimal bidding price in OTR-EG will decrease first and then increase with the deadline, while the optimal bidding price in OTR-P will increase with the deadline.} We also show the optimal bidding prices on spot instance in OTR-P increase with the increase of the penalty coefficients, and very small or very large penalty coefficient for incomplete jobs will lead to a slower increase of bidding price. However, {\em when the deadline is larger than the execution time, the user will solely depend on spot instance to finish the job, and the optimal bidding prices for OTR-EG and OTR-P do not change with the increase of the deadline.}\end{color}

We, subsequently, consider the case where a user places a persistent request for the spot instance. In the persistent request, unlike the one-time request, an interrupted job will be resumed when the bid price exceeds the spot price again. A lower bid can reduce the cost of executing the job on the spot instance, while the number of interruptions may be increased, so dose the total idle time, total recovery time and total completion time, which may exceed the deadline. Thus, it is not apriori clear that how much portion of the job should be run on the spot-instance, and what the bidding price will be  if we want to finish the job before the deadline in expectation. {\em Our result shows that the persistent request reduces the expected cost of the user as compared to the one-time-requests. Similar to the one-time-request, only when the deadline is smaller than the execution time, the user selects the on-demand instances.} Note that we did not consider any penalty based approach in the persistent request. This is mainly because in the persistent request, the interrupted job is not discarded and thus, it will finish unlike the one-time request. 


The main contributions of this paper can be summarized as follows: 

\begin{itemize}[leftmargin=*]
\item \textbf{User's optimal or local optimal bidding strategies}: For the one-time request and persistent request job, we formulate the cost minimization problem as an optimization problem. The problem turns out to be non-convex. Nevertheless, we  find analytical expression for the optimal solutions for the one-time request without penalty and the persistent request.  However, for the one-time request with  penalty, we provide algorithms for solving the proposed non-convex problem. 

\item \textbf{Analytical Results}: Our analytical result shows that only when the deadline is smaller than the execution time, the user should select the on-demand instances. We show a threshold type behavior for one-time request. When the penalty is above a certain threshold, the user opts for the on-demand instances. However, below the threshold, the portion of the job that is run on the on-demand instance becomes independent of the penalty parameters. Our result shows that the persistent requests reduce the expected cost of the user compared to the one-time-request.
\item \textbf{Numerical Evaluation}: We, empirically, evaluate the impact of different parameters on the portion of the job should be run on the spot instances, and the bidding price. Our result shows that the expected cost, and the portion of the job that is run on the on-demand instance decreases with the increase in the deadline. The bidding price in the persistent request instance decreases with the increase in the deadline. However, the bidding price in the one-time request increases with the increase in the deadline in the one-time request. 

\item \textbf{Real time Data}: Using the real time data, we show the strength of our approach compared to the baseline strategies readily employed by the users. Specifically, we compute the optimal bidding strategy in the spot-instance, and the optimal portion of the job should be run on the on-demand instance. Finally, we show that the user's cost is reduced using our approach compared to the baseline ones.

\end{itemize}

\subsection{Related Literature}
The genre of works can be divided based on the topics they considered.

\textbf{Portfolio Contract}: This type of portfolio contract has been practiced and studied in many other contexts especially in procurement, e.g., Hewlett-Packard (HP) uses a portfolio approach for procurement of electricity or memory products \cite{hp}. Motivated by that practice, the procurement has been studied in multi-period \cite{sim05} and single-period \cite{qi10} settings. However, the above portfolio contracts did not study the cloud spot market, the deadline, and the execution time.


\textbf{Deadline-based cloud scheduling}: Deadline-based resource allocation has been considered in many cloud research works. While resource allocation approaches are utilized in the cloud context, which aims to  meet the jobs' deadlines and utilize the cloud resource more efficiently \cite{li16} or minimize the total execution cost \cite{maria14, rod13}, they only consider from the CSP's perspective. In this paper, we develop a model to optimize the bidding strategies of the user. Although in \cite{lz15}, optimal one-time request and persistent request bidding strategies are proposed, they do not consider the deadline of the user's job, which may be not practical \cite{li16}. In this paper, we not only consider one-time request without penalty and persistent request bidding strategies, we also include one-time request with penalty model to balance the finished job and penalty for the unfinished job or late completed job. This model can be applied to the type of the job with a soft deadline, which is a deadline when it is unmet, dose not lead to computation useless \cite{abb88,zhou17}. 
 

\textbf{Game Theory, Auctions and Bidding}: Game Theory has been used to model the interactions between CSPs and users to reach an equilibrium \cite{mak11, ash09, danilo11, ard13}. In distributed resource allocation games, auctions have been proposed to be a solution \cite{kut99,song13}, including to ensure truthful bidding in Amazon spot pricing \cite{qwang13}.  

The remainder of this paper is organized as follows. Section \ref{system} introduces the system model. In Section \ref{User Bidding Strategies}, we present three types of bidding strategies: one-time request without penalty (Section III-A) and with penalty (Section III-B), and persistent request (Section III-C). In Section \ref{simulation}, extensive simulation results show the benefits of each strategy. We test our proposed model and results using Amazon spot price history in Section \ref{real}. Finally,  Section \ref{con} concludes this paper. We relegate all the proofs in Appendix.

\section{System Model}\label{system}
We consider a CSP, which can provide two types of computing instances: on-demand instance and spot instances. On-demand instance can guarantee the availability, but the price is fixed and high. In order to provide a reduced-cost service, the CSP also offers spot instance, which may terminate unpredictably since the price fluctuates based on availability and demand, and update spot price in every certain time period, {\it e.g.}, every 5 minutes. The users can run its job on the spot instance as long as  the bid price exceeds the spot price. 

We consider a user can select a combination of on-demand and spot instance to finish a  task.  In other words, the user decides the portion of the job to be run on the on-demand instance and the rest in the spot instance. The spot price is much lower than the on-demand price for every instance type \cite{web}. However, the spot market cannot guarantee that the task is run continuously if her bidding price is not high enough, which means the task may be interrupted and takes extra time to get recovered, so that the task may take longer time to get finished. Therefore, the user should balance the proportion of the job she runs on on-demand instance, with the bidding price in an spot market to run the rest of the job on spot instance. This paper aims to provide a framework to help users to decide how much to run at on-demand instances and how much to bid for spot instances with the objective to minimize the total cost, subject to the constraint that deadline has to be satisfied.


\begin{color}{black}We consider a series of discrete time slots $t \in \{ 1, 2, \cdots\}$ and denote the spot price at time slot $t$ as $\pi(t)$. We assume the spot prices $\pi(t)$ are i.i.d,  upper-bounded by the on-demand price $\bar{\pi}$ for the same instance type and lower-bounded by the marginal cost of the instance $\underline{\pi}$, which is very small and closed to 0 \cite{xu13}, that is, $\underline{\pi} \leq \pi(t) \leq \bar{\pi}$. We use $F_{\pi}$ to denote the cumulative distribution function (CDF) of spot price $\pi(t)$, which is heavy-tailed \cite{lz15}, corresponding to the probability density function (PDF) $f_{\pi}$. We use $p$ to denote the user's bid price. $F_{\pi}(p)$ gives the probability that $p\geq \pi(t)$, that is, the user's bid gets accepted. We assume $f_{\pi}$ monotonically decreases, thus $F''_{\pi}(p) = f'_{\pi}(p) < 0$, i.e., $F_{\pi}(p)$ is concave in $p$, which is consistent with the observations and findings in \cite{lz15}.\end{color}

Suppose a user has a certain job $J$, which can be split and run on different machines. First, the user would like to purchase on-demand instances to ensure a certain desired level of finished job in the future; say, $q$ portion of the total amount of task run on on-demand instance. And the rest portion of the job $(1-q)$ will be run on spot instances. Then the user needs to decide how much to bid ($p$) to the spot market. The strategy of the user is to decide $p$ and $q$. More formally, we define the strategy of a user in the following

\begin{definition}
The strategy of a user is the vector $\textbf{x}=(q,p)$. 
\end{definition}
A user decides $(q,p)$ while minimizing the expected cost.   Figure \ref{fig10} depicts the major considerations that we need to incorporate in the bidding and resource allocation decisions graphically. 

\begin{figure}[htbp]
\centering
\includegraphics[scale=0.4]{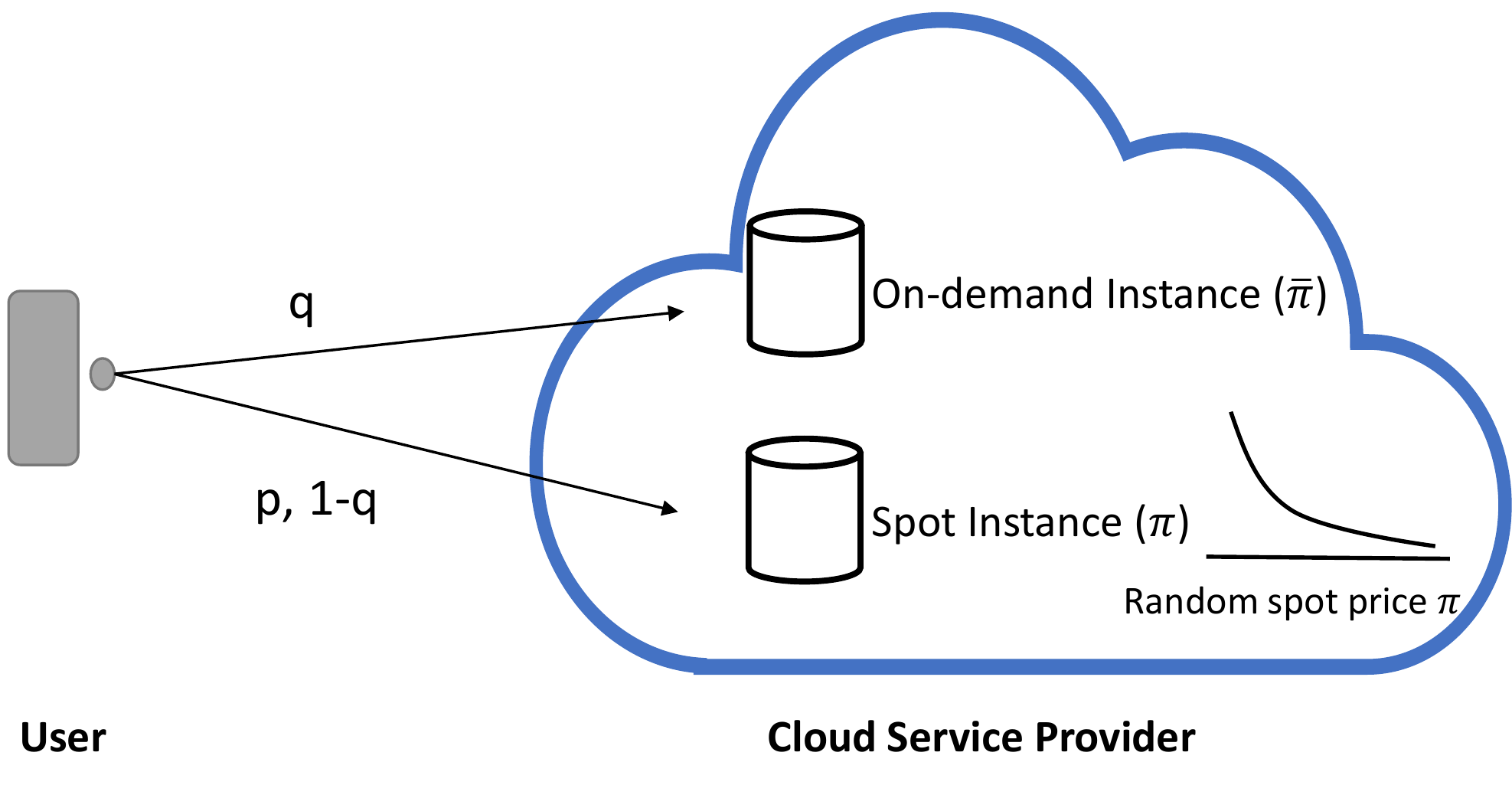}
\caption{User Decision Model}
\label{fig10}
\end{figure}

\begin{color}{black} We consider three bidding strategies: OTR-EG, OTR-P, and PR, where the first two strategies can be used if the user places a one-time requests, and the third strategy will play a role when the persistent requests are placed.  \end{color}
The problem is to design optimal portfolio of contracts and bidding strategies in different settings, so that the expected total cost is minimized, subject to the deadline constraints. We use $\textbf{x}^{*}=(q^{*},p^{*})$ to denote user's optimal decisions. Then we will investigate the extent of the benefits that can be accrued by managing a portfolio of contracts instead of sticking to on-demand instance contract.

Our notations are summarized in Table \ref{tab1}. We not only consider the job's characteristics such as its execution time $t_{e}$, total completion time $T$, the recovery time for writing and transferring the data saved after interruption $t_{r}$, but also include the deadlines $t_{s}$ on the job completion times. 

\begin{table}[H]
\centering
\caption{Key terms and symbols}
\label{tab1}
\begin{tabular}{|l|l|}
\hline
\textbf{Symbol}   & \textbf{Definition}                                    \\ \hline
$p$               & User bid price                                         \\ \hline
$q$               & The portion of job that will run on on-demand instance \\ \hline
$\pi$             & Spot price                                             \\ \hline
$\bar{\pi}$       & On-demand price                                        \\ \hline
$\underline{\pi}$ & Minimum spot price                                     \\ \hline
$t_{k}$           & Length of one time slot                                \\ \hline
$T$               & Total job completion time                              \\ \hline
$t_{s}$           & Deadline of the job                                    \\ \hline
$t_{e}$           & Job execution time (w/o interruptions)                 \\ \hline
$t_{r}$           & Recovery time from an interruption                     \\ \hline
$c_I$               & Penalty coefficient for incomplete job                                  \\ \hline
$c_s$              & Penalty coefficient for late completed job                                    \\  \hline
\end{tabular}
\end{table}

\section{User Bidding Strategies}\label{User Bidding Strategies}

In this section, we first consider OTR-EG, subsequently, OTR-P, and finally, PR for a single instance on each machine type.

\subsection{OTR-EG}\label{wo-penalty}

In one-time request, a job on spot instance will not be resumed as soon as the job is interrupted, the user's objective is to minimize the total cost. However, the job has to be completed before the deadline. In order to make sure that there exists at least one feasible solution such that  the job can be finished before deadline, we assume that $t_{e} \leq 2t_{s}$. The factor 2 comes from the fact that the smallest time a job can be completed when the half of the job is run on the on-demand instance, and the rest in the spot instance. Thus, for a feasible solution, $0.5t_e\leq t_s$. 

First, we compute the expected amount of time that a job will continue running without any interruptions if the user bids $p$. Note that when a user bids the price $p$, its bid will only be accepted if the spot price $\pi$ is lower than $p$. Thus, the probability that the bid will be accepted at an instance with probability $1-F_{\pi}(p)$.  Thus, we have the following
\begin{lemma}\cite{lz15}
The expected amount of time that a job will continue running without any interruptions is: 
\begin{equation}
t_{u}(p)=t_{k}\sum_{i=1}^{\infty}i F_{\pi}(p)^{i-1}(1-F_{\pi}(p)) = \frac{t_{k}}{1-F_{\pi}(p)}
\end{equation}
\end{lemma}
In order to guarantee that the job that runs on spot instance can be finished, that is, the expected amount of time that a job will keep running must exceed its execution time, we need the following constraint:
\begin{equation}\label{P1-1}
(1-q)t_{e}\leq \frac{t_{k}}{1-F_{\pi}(p)}. 
\end{equation}

Now, we compute the expected time tn for a job to enter the system when the user bids p. Note that a job can only enter the system if the spot price is lower than the bid price. The random variable that the bid gets into the system follows a Geometric distribution. Thus we get the following term:

\begin{equation}\label{tn}
t_{n}=t_{k}\sum_{i=1}^{\infty}i (1-F_{\pi}(p))^{i}F_{\pi}(p)= t_{k}(\frac{1}{F_{\pi}(p)}-1)
\end{equation}
 From \eqref{tn}, we notice that $t_{n}$ monotonically decreases with $p$. Thus we would intuitively expect that in order to finish the job $(1-q)t_{e}$ before deadline, the user should bid more to shorten the expected amount of time to enter the system $t_{n}$. 

In order to finish the job that run on spot instance, the deadline need to be longer than the summation of the expected time to enter the system and the required execution time $(1-q)t_{e}$, that is, 
\begin{equation}\label{P1-3}
t_{n}+(1-q)t_{e} \leq t_{s}. 
\end{equation}

\begin{lemma}\cite{lz15}
The expected price that a user must pay to use an instance in each time slot on spot instance, or the expected value of all possible spot prices that are no more than $p$ is  
\begin{equation}
 \mathbb{E}(\pi | \pi \leq p) = \frac{ \int_{\underline{\pi}}^{p}x f_{\pi}(x)dx}{F_{\pi}(p)} 
 \end{equation}

\end{lemma}

\begin{lemma}\label{lemma3}
$ \mathbb{E}(\pi | \pi \leq p)$ monotonically increases with $p$ and not larger than $\frac{\bar{\pi}+\underline{\pi}}{2}$. 
\end{lemma}
\begin{proof}
	The proof is provided in Appendix \ref{prolemma3}.
\end{proof}

If the user's bid price is $p$, the user has to pay for the spot instance is the expected spot price $\mathbb{E}(\pi | \pi \leq p)$. The user puts $(1-q)$ fraction of the job on the spot market. Thus the user's expected cost for running the job on the spot instance is $(1- q)t_{e}\mathbb{E}(\pi | \pi \leq p) $. Recall that on the on-demand instance, a user has to pay the price $\bar{\pi}$. The user's cost for running $q$ fraction of the job on the on-demand instance is $ q t_{e}\bar{\pi}$. Thus the total expected cost of running the job is

 $$ q t_{e}\bar{\pi} + (1- q)t_{e}\mathbb{E}(\pi | \pi \leq p). $$ 
 
 The user also has to make sure that its job is completed before the deadline. In other words, the expected time the job will take to finish must be smaller than the deadline $t_s$. The total time a job takes in the on-demand instance is $qt_e$ and in the spot-instance is given by $t_{n}+(1-q)t_e$.  Thus, the total time to complete the job is  $\max\{qt_e, t_{n}+(1-q)t_e\}$. Hence, the user is solving the following problem:

\begin{align}
\begin{split}
{\text{(P1) \quad   min}} &
\quad  \Phi_{1}(p,q) = q t_{e}\bar{\pi} + \frac{ (1- q)t_{e}  \int_{\underline{\pi}}^{p}x f_{\pi}(x)dx}{F_{\pi}(p)}
  \end{split}
\label{obj1} 
\\[2ex]
\text{subject to}\qquad & (\ref{P1-1}), \quad (\ref{P1-3}) \nonumber
\\
& q t_{e} \leq t_{s}
\label{P1-2}
\\
& \underline{\pi} \leq  p \leq \bar{\pi}
\label{P1-4}
\\
& 0 \leq q \leq 1
\label{P1-5}
\end{align}

The objective function (\ref{obj1}) aims to minimize the expected total cost running on on-demand and spot instance.
In order to guarantee that the job can be completed before deadline, we include constraints (\ref{P1-2}) and (\ref{P1-3}), which represent that the both of maximum job completion time on each instance including the job execution time and time to enter the system (if any) should not exceed the deadline.  The constraint in (\ref{P1-4}) denotes the upper and lower bound of the bidding price in the spot instance.

\begin{claim}\label{cla1}
When $t_{s}< t_{e} \leq 2t_{s}$, $q^{*} \leq \frac{1}{2} \leq \frac{t_{s}}{t_{e}}$ and $F_\pi(p^{*}) \geq \frac{1}{2}$. 
\end{claim}
\begin{proof}
	The proof is provided in Appendix \ref{claim1}. 
\end{proof}
The above theorem shows that $q^{*}$ is at most half. Thus, at most half of the job is put on the on-demand instance.
Intuitively, on-demand price is larger than spot price, if the user wants to minimize his total cost, he will run as much job as possible on spot instance. However, if the deadline is smaller than the execution time, the user may have to opt for on-demand instance as the user has to complete the job before the deadline. The above claim shows that the fraction of the job that will be run on on-demand instance never exceeds half. The results show that the bidding price in the spot market has to be at least the median of the distribution.   


\begin{proposition}\label{pro1}
When $\frac{t_{e}}{2} < t_{s} < t_{e}$, the optimal bid price for a one-time request is
\begin{equation}
 p^{*}=\max \{\psi_{1}^{-1} (t_{k}\bar{\pi}), \psi_{2}^{-1} (0)\},
\end{equation}
  where $\psi_{1}^{-1}(.)$ is the inverse function of 
 \begin{equation}
\psi_{1}(p)  =\frac{ 2t_{k} \int_{\underline{\pi}}^{p}x f_{\pi}(x)dx  }{  F_{\pi}(p) } + 2pt_{s}F_{\pi}(p) - pt_{s} - (t_{s}+t_{k}) \int_{\underline{\pi}}^{p}x f_{\pi}(x)dx
\end{equation}
 and $\psi_{2}^{-1}(.)$ is the inverse function of 
 \begin{equation}
 \psi_{2}(p)  =   (t_{s}+t_{k})F_{\pi}(p) - (t_{s}+t_{k})F_{\pi}(p)^{2} - t_{k}
 \end{equation}
 with $F_{\pi}(p) \geq \frac{1}{2}$.  Further the optimal portion of the job to run on on-demand instance is 
 \begin{equation}
q^{*} = 1- \frac{  t_{s}-t_{k}( \frac{1}{F_{\pi}(p^{*})} -1  )   }{t_{e}}. 
\end{equation}
\end{proposition}
\begin{proof}
	The proof is provided in Appendix \ref{apdxpro1}. 
\end{proof}

Proposition \ref{pro1} implies that the portion of job that runs on the on-demand instance $q^{*}$ decreases as the deadline $t_{s}$ increases. Intuitively, as the deadline increases, a user can be more likely to run the job in the spot instance as the job can be more likely to be finished using spot instances instead of the on-demand instance resulting into a lower cost. The above proposition also shows that the optimal bidding price $p^{*}$ takes the maximum value of two functions. The intuition is that with certain portion of job running on spot instance, lower price can decrease the total cost, however, in order to guarantee that the spot instance can continue running without any interruption, the price cannot get too low.

Note that though the optimization problem is non-convex, we still obtain the optimal strategy. $q^{*}$ is non-zero, however it is less than half. 


\begin{proposition}\label{pro2}
When $ t_{s} > t_{e}$, the optimal bid price for a one-time request is
\begin{equation}
 p^{*}=F_\pi^{-1}(1-\frac{t_{k}}{t_{e}}).
 \end{equation}
Further, the optimal portion of the job to run on on-demand instance is 
 \begin{equation}
q^{*} = 0.
\end{equation}
\end{proposition}
\begin{proof}
	The proof is provided in Appendix \ref{apdxpro2}. 
\end{proof}


We can observe from Proposition \ref{pro2} that when $t_{s} > t_{e}$, all of the job will be run on spot instance, and the optimal bid price $p^{*}$ does not depend on the deadline $t_{s}$, but instead {\em increases as the number of time slots that are needed to complete the job, $t_{e}/t_{k}$ increase}.  This increase in the bid price with the $t_e/t_k$  is intuitive because more consecutive time slots are required to complete the job, and thus, a higher bid is needed.

\subsection{OTR-P}\label{wi-penalty}

In section \ref{wo-penalty}, we consider that if the job is interrupted, it can not continue. However, there are some possibilities that the job will not get completed before deadline or get interrupted before completion. In this section, we consider the scenario where a user has to incur a penalty when the job is not completed before the deadline. Note that in the one-time request there can be two possible ways the job may not be completed before the deadline: i)  {\em{The job is incomplete}},  and  ii) {\em{The job is late}}. We now define each of them.

\begin{definition}
Incomplete Job: the job is interrupted before its completion.
\end{definition}

\begin{definition}
Late Job: the job is completed (i.e. it is never interrupted), however the total time it takes is greater than the deadline. For example, when the time to enter the system is long, the job may get completed beyond the deadline. 
\end{definition}

In section \ref{wo-penalty}, we put a constraint where we consider the expected time for completing the job is less than the deadline. However, as the spot price is random, the job that we run on the spot instance may not be completed (as it is one-time request) or may be completed after the deadline. In this section, we compute the optimal solution where we put penalty for the job which is incomplete or late.


If the job is not completed (i.e., the case (i) holds), there is a penalty $c_I$ associated with the unfinished portion of the job\footnote{If the job is not complete, one may need on-demand instances or incur penalty for the unfinished job.}. If the job is not interrupted, however, it is completed after the deadline,  there is another penalty $c_s$ for the portion of the job that is completed after the deadline. We also assume $c_{s} \leq c_{I}$, which means the completed job will have a lower penalty than that of incomplete one. We denote  the total number of slots needed to complete the job in the spot instance is $K(q) = \frac{(1-q)t_{e}}{t_{k}}$. 

The user wants to minimize the expected cost which also consists of the expected penalty for incomplete jobs.  We begin by finding the expected total penalty and then formulate the optimization problem before deriving the user's optimal bid price. We now compute the expressions.

\begin{definition}
Let $L(p,q)$ be the expected time by which a completed job is late, {\em i.e.}, $L(p,q)=(t_c-t_s)^{+}$ where $t_c$ is the time to complete the job when the user's strategy is $(p,q)$. 
\end{definition}

\begin{lemma}\label{the1}
\begin{equation}
\begin{split}
L(p,q) = t_{k}(1-F_\pi(p))^{\frac{t_{s}}{t_{k}}-K(q)+1}F_{\pi}(p)^{K(q)-2} 
\end{split}
\end{equation}
\end{lemma}

 Recall that $F_{\pi}(p)$ represents the probability that $p \geq \pi$, that is, the request starts to run or continues running (we denote it as {\em ``success''}); and $1-F_{\pi}(p)$ is the probability that the request fails or get terminated (we label it as {\em ``failure''}). The intuition behind Lemma \ref{the1} is from when the user places the bid, a Bernoulli trial is ``conducted'' in each time slot. In order to guarantee the portion of job $(1-q)t_{e}$ can get completed on spot instance, a fixed number $K(q) = \frac{(1-q)t_{e}}{t_{k}}$ statistically independent Bernoulli trials' results need to be``success'' successively, which happens with probability $F_{\pi}(p)^{K(q)-1}$. On the other hand, the random variable that the bid gets the first ``success'' follows a Geometric distribution. For example, when the bid dose not win until the $M$th time slots, the probability is $(1-F_{\pi}(p))^{M-1}F_{\pi}(p)$. When the number of time slots that the bid spends without getting accepted is larger than $\frac{t_{s}}{t_{k}}- K(q)$, the job may be completed but late. Considering all the possibilities of the late but completed job, we have the expression in Lemma \ref{the1}.  



\begin{definition}
Let $EC(p,q)$ be the portion of the job that is completed on the spot instance.
\end{definition}

\begin{lemma}\label{the2}
 The expected portion of the job that can be completed when the user bids the price $p$
\begin{equation}
\begin{split}
EC(p,q) = \frac{1-F_{\pi}(p)^{K(q)}}{1-F_{\pi}(p)}t_{k}
\end{split}
\end{equation}
\end{lemma}

The intuition behind Lemma \ref{the2} is in the Bernoulli process, the expected completed job is from the first ``success'' to the job interruption (the first ``failure''  from the first ``success'') or the job completion.

\begin{definition}
Let $EI(p,q)$ be the portion of the job that is incomplete in the spot instance when the user bids the price $p$. 
\end{definition}

$EI(p,q)$  is simply the difference between the total portion of job running on spot instance and the expected portion of job that can be completed with bid price $p$ on spot instance, thus we can obtain

\begin{lemma}\label{the3}
\begin{equation}
\begin{split}
EI(p,q) = (1-q)t_{e}-EC = (1-q)t_{e}-\frac{1-F_{\pi}(p)^{K(q)}}{1-F_{\pi}(p)}t_{k}
\end{split}
\end{equation}
\end{lemma}

Considering the penalty for incomplete job and completed but late job, the user solves the following optimization problem:

\begin{align}
\begin{split}
{\text{(P2) \quad  min \quad}}  U & =  q t_{e}\bar{\pi} +  \frac{\int_{\underline{\pi}}^{p}x f_{\pi}(x)dx}{F_{\pi}(p)} EC(p,q)\\
&+c_{I}EI(p,q) +c_{s}L(p,q)
\end{split}
\\
\text{subject to}&  \quad (\ref{P1-2}),\quad  (\ref{P1-4}), \quad (\ref{P1-5})
\nonumber 
\\
& (1-q)t_{e} \leq t_{s}
\label{P2-4}
\end{align}

{\bf Solution Method:} Note that if $c_I$, $c_s$ are large, $q$ and the price should increase in order to avoid hefty penalty. Although the constraints in (P2) are linear, the objective function is non-convex. Thus problem (P2) is non-convex. Unlike the problem in the OTR-EG, we cannot have any closed form for (P2). We use the successive convex approximation based  algorithm  \cite{kojima2001complexity}, which iteratively solves approximate convex relaxation of the problem. The algorithm is stated in Algorithm \ref{alg11}. Let $\tilde{U}$ as the approximation of the objective function $U$, which is the first order approximation of $U$, that is,

\vspace*{-0.15in}
\begin{equation}
\tilde{U}(\textbf{x};\textbf{y}) = \sum_{i=1}^{2} (\nabla_{x_{i}}U(\textbf{y})^{T}(x_{i}-y(i)) + \frac{\tau}{2}(x_{i}-y_{i})^{2}  ).
\end{equation}

Instead of solving $U$, we solve $\tilde{U}$ iteratively, which is shown in Algorithm \ref{alg11}. When the difference between two successive objective values is smaller than $\epsilon = 10^{-5}$, the iteration stops.

\begin{algorithm}\label{alg11}
 \caption{Successive Convex Approximation  Algorithm to solve (P2)}
 \begin{algorithmic}[1]\label{alg11}
 \renewcommand{\algorithmicrequire}{\textbf{Input:}}
 \renewcommand{\algorithmicensure}{\textbf{Output:}}
 \REQUIRE $\nu = 0$, $k = 0$, $\gamma \in (0, 1]$, $\epsilon > 0$, $\textbf{x}^{0}=(q^{0}, p^{0})$ such that $\textbf{x}^{0}$ is the solution of OTR-EG.
 \ENSURE  $\hat{\textbf{x}}(\textbf{x}^{\nu})$
 \WHILE{obj(k) - obj(k-1) $\geq \epsilon$}
 \STATE // solve for $\textbf{x}^{\nu+1}$ with given $\textbf{x}^{\nu}$.
 \STATE $\textbf{Step 1:}$ Compute $\hat{\textbf{x}}(\textbf{x}^{\nu})$, the solution of  $\hat{\textbf{x}}(\textbf{x}^{\nu})= \text{argmin}  \tilde{U}(\textbf{x}, \textbf{x}^{\nu}) $, s.t.  (\ref{P1-2}), (\ref{P1-4}), (\ref{P1-5}), and (\ref{P2-4}), solved using CVX.
 \STATE $\textbf{Step 2:}$ $\textbf{x}^{\nu+1} = \textbf{x}^{\nu}+\gamma^{\nu}(\hat{\textbf{x}}(\textbf{x}^{\nu}) - \hat{\textbf{x}}^{\nu})$.
 \STATE //update index
 \STATE  $\textbf{Step 3:}$ $\nu \gets \nu+1 $. 
 \ENDWHILE
 \end{algorithmic} 
 \end{algorithm}

\subsection{PR}\label{sec:pers}
In section \ref{wo-penalty} and \ref{wi-penalty}, we consider the one-time request job. We now consider a job that places a persistent spot instance request, where the job can be interrupted and recovered upon resuming when the bid price is above the spot price. 

We, first, compute the total time $T$ for completing a job in the PR in spot instance. The expected running time is $TF_{\pi}(p)$ with bidding price $p$, and the associated expected idle time is $(1-F_{\pi}(p))T$. The expected number of idle-to-running transitions in $T/t_{k}$ time slots is $\frac{T}{t_{k}}F_{\pi}(p)(1-F_{\pi}(p))$. We incur a recovery time every time there is a transition from the ideal state to the running state. Thus, $TF_{\pi}(p)=( \frac{T}{t_{k}}F_{\pi}(p)(1-F_{\pi}(p)) )t_{r}+(1-q)t_{e}$, we get $TF_{\pi}(p) = \frac{(1-q)t_{e}}{  1- \frac{t_{r}}{t_{k}}(1-F_{\pi}(p))   }$ \cite{lz15}. 
\begin{lemma}
The total time including the recovery, execution and idle time is 

\begin{equation}
\begin{split}
T & = \frac{(1-q)t_{e}}{  1- \frac{t_{r}}{t_{k}}(1-F_{\pi}(p))   }\frac{1}{F_{\pi}(p)}
\end{split}
\end{equation}
\end{lemma}
Note that as $F_{\pi}(p)$ increases the time decreases. 
\begin{align}
\begin{split}
{\text{(P3) \quad   min \quad}} &
\Phi_{3}(p,q) = q t_{e}\bar{\pi} +\frac{ (1- q)t_{e} }{1-  \frac{t_{r}}{t_{k}} (1-F_{\pi}(p) )   } \frac{ \int_{\underline{\pi}}^{p}x f_{\pi}(x)dx}{F_{\pi}(p)}
  \end{split}
\label{green} 
\\[2ex]
\text{subject to}\qquad &(\ref{P1-2}),\quad  (\ref{P1-4}), \quad (\ref{P1-5})
\nonumber 
\\
&   \frac{(1-q)t_{e}}{  1- \frac{t_{r}}{t_{k}}(1-F_{\pi}(p))   }\frac{1}{F_{\pi}(p)}  \leq t_{s} 
\label{green-constraint-1} 
\\
& t_{r} < \frac{t_{k}}{2(1-F_{\pi}(p))}
\label{green-constraint-20}
\end{align}

The expected total time including the recovery, execution and idle time should be smaller than the deadline, so we get constraint (\ref{green-constraint-1}). The constraint in (\ref{green-constraint-20}) guarantees that the recovery time is sufficiently small such that the job's running time is finite \cite{lz15}. 

\begin{claim}\label{cla2}
In PR, when $\frac{t_{e}}{2} < t_{s} < t_{e}$, $F_{\pi}(p^{*}) \geq \frac{1}{2}$. 
\end{claim}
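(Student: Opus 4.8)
The plan is to mirror the argument of Claim~\ref{cla1}, adapted to the persistent-request model, and to track carefully how the strict hypothesis $\tfrac{t_e}{2} < t_s$ upgrades the weak bound $F_\pi(p) \ge \tfrac12$ there into the strict bound $F_\pi(p) > \tfrac12$ claimed here. First I would invoke the same cost-minimization logic: since the on-demand price never falls below the spot price, the decision maker loads as much of the job as possible onto the spot instance, so the spot portion is allowed to run up to the deadline and the on-demand fraction $q^{*}$ is driven down to whatever feasibility permits. I would then record the capacity/feasibility relation for the persistent setting, which I expect to take the same form as in Claim~\ref{cla1}, namely $t_{s}F_{\pi}(p) \ge (1-q^{*})\,t_{e}$, with the left side being the expected effective spot time (availability $F_{\pi}(p)$ over the spot horizon $t_{s}$) and the right side the work the spot instance must absorb.

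Next I would use the strict hypothesis $\tfrac{t_e}{2} < t_s$, i.e.\ $t_{e} < 2t_{s}$, to force \emph{strictly} more than half the job onto the spot instance, so that $1 - q^{*} > \tfrac12$ (in contrast to the weak $q^{*}\le\tfrac12$ of Claim~\ref{cla1}). Chaining this strict inequality with the capacity relation and then with $t_{s}\le t_{e}$ gives
$$ t_{s}F_{\pi}(p) \;\ge\; (1-q^{*})\,t_{e} \;>\; \tfrac{1}{2}\,t_{e} \;\ge\; \tfrac{1}{2}\,t_{s}, $$
and dividing through by $t_{s}>0$ yields $F_{\pi}(p) > \tfrac{1}{2}$, which is exactly the claim. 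The crucial point is that the strictness is injected at the single step $1-q^{*} > \tfrac12$ and then carried through the chain, so that even in the boundary case $t_{s}=t_{e}$ (where $\tfrac12 t_e = \tfrac12 t_s$) the conclusion remains strict.

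The main obstacle is verifying that the feasibility relation in the \emph{persistent}-request model genuinely has the clean form $t_{s}F_{\pi}(p)\ge(1-q^{*})t_{e}$, rather than a version distorted by the repeated re-request dynamics of that setting; once this relation is established, the rest is the same two-line arithmetic as in Claim~\ref{cla1}. I would also double-check the implication $t_{e}<2t_{s}\Rightarrow 1-q^{*}>\tfrac12$ at the level of the model—confirming that the spot share is monotone in the spot horizon and equals exactly one half at the threshold $t_{e}=2t_{s}$—so that the strict hypothesis really does produce a strict spot majority, which is the whole distinction between Claim~\ref{cla2} and Claim~\ref{cla1}.
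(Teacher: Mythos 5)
There is a genuine gap, and it sits exactly where you flagged it: the capacity relation you import from Claim~\ref{cla1} does not carry over to the persistent-request model. (Note also that the paper states Claim~\ref{cla2} without any proof of its own, so your proposal can only be checked against the persistent-request formulas the paper does display.) From equation (\ref{P3-q}), the work absorbed by the spot instance satisfies
\begin{equation*}
(1-q^{*})\,t_{e} \;=\; t_{r} + t_{s}F_{\pi}(p)\Bigl(1-\tfrac{t_{r}}{t_{k}}\bigl(1-F_{\pi}(p)\bigr)\Bigr),
\end{equation*}
not $t_{s}F_{\pi}(p)\ge(1-q^{*})t_{e}$. The discrepancy $(1-q^{*})t_{e}-t_{s}F_{\pi}(p)=t_{r}\bigl[1-\tfrac{t_{s}}{t_{k}}F_{\pi}(p)(1-F_{\pi}(p))\bigr]$ is positive whenever $t_{s}F_{\pi}(p)(1-F_{\pi}(p))<t_{k}$, which always holds if $t_{s}<4t_{k}$ since $F(1-F)\le\tfrac14$; so your key inequality can fail outright rather than merely needing verification. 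Running your chain with the correct relation yields only $t_{r}+t_{s}F_{\pi}(p)>\tfrac12 t_{s}$, i.e.\ $F_{\pi}(p)>\tfrac12-\tfrac{t_{r}}{t_{s}}$, which is strictly weaker than the claim. Your second concern --- whether $t_{e}<2t_{s}$ forces $1-q^{*}>\tfrac12$ --- is comparatively benign; the first is fatal as the argument stands.

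A repair is suggested by the paper's own proof of Proposition~\ref{pro3}: there the objective $G(p)$ of (P3') is shown to be strictly decreasing in $p$, so $p^{*}=\bar{\pi}$ and hence $F_{\pi}(p^{*})=1>0.5$ trivially. If Claim~\ref{cla2} is meant to describe the optimal bid, that one-line observation suffices --- provided the claim is not itself invoked inside the proof of Proposition~\ref{pro3}, in which case the argument would be circular and you would instead need a direct feasibility bound that explicitly accounts for the $t_{r}$ and $t_{k}$ terms in (\ref{P3-q}).
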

\begin{proof}
	The proof is provided in Appendix \ref{claim2}.
\end{proof}

The above claim shows that in the spot instance a user's bid should exceed the median value of the distribution.

\begin{proposition}\label{pro3}
When $\frac{t_{e}}{2} < t_{s} \leq t_{e}$, the optimal bid price for a PR is
\begin{equation}
 p^{*}=\bar{\pi}
 \end{equation}

Further, the optimal portion of the job to run on on-demand instance is 
\begin{equation}
q^{*} = 1- \frac{t_{s}  }{t_{e}}
\end{equation}
\end{proposition}
\begin{proof}
	The proof is provided in Appendix \ref{apdxpro3}.
\end{proof}

The above proposition entails that when $\frac{t_e}{2}<t_s\leq t_e$, the user bids the highest possible value in the spot instance. The user also runs a portion of the job in the on-demand instance. The portion of the job that is run on the on-demand instance is given by $$   q^{*} = 1- \frac{t_{s}  }{t_{e}}.   $$ If $t_e$ is large, $q^{*}$ is higher.  Proposition \ref{pro3} thus implies that surprisingly, {\em  the bidding price does not change as the deadline changes}, and the optimal portion of the job that run on on-demand instance is decreasing with the deadline. The bidding price at the spot instance is the maximum possible bidding price. This is intuitive, because when $t_e$ is large, the bid price has to be large.


\begin{proposition}\label{pro4}
When $ t_{s} > t_{e}$, the optimal bid price for a PR is
\begin{equation}
 p^{*}=\psi_{3}^{-1} ( \frac{t_{e}}{t_{s}}  ).
 \end{equation}

where $\psi_{3}^{-1}(.)$ is the inverse function of 

$$  \psi_{3}(p) = F_{\pi}(p)[ 1-\frac{t_{r}}{t_{k}}(1-F_{\pi}(p))] $$

The optimal portion of the job to run on on-demand instance is 
\begin{equation}
q^{*} = 0 
\end{equation}
\end{proposition}
\begin{proof}
	The proof is provided in Appendix \ref{apdxpro4}.
\end{proof}

This shows that similar to the one-time request, the portion of the job that is run on the on-demand instance is $0$ when $t_s>t_e$. Thus, a job will be run on the on-demand only when $t_s\leq t_e$. Also note that $\psi_3(\cdot)$ is an increasing function. Hence, as the ratio $\dfrac{t_e}{t_s}$ increases the bidding price also increases. 

\begin{color}{black}
\begin{lemma}\label{lemmaaa}
When $t_{s}< t_{e} \leq 2t_{s}$, the difference of optimal portions of job to run on on-demand instance with OTR-EG and PR is bounded by $\frac{t_{k}}{t_{e}}$.  
\end{lemma}
\end{color}

\begin{proof}
	The proof is provided in Appendix \ref{apdxlemmaaa}.
\end{proof}

\section{Numerical Studies}\label{simulation}
In this section, we present computational results that illustrate the sensitivity of the expected total cost, the corresponding bid price and the portion of job that run on on-demand instance in terms of the different parameters used in the model. We specifically focus on the impact of the deadline, penalty coefficient, and the recovery time. \begin{color}{black}  Note that we use closed-form solutions for our OTR-EG and PR, and convex approximation algorithm for the OTR-P.\end{color}

\subsection{Distribution of Spot Price }
We, first,  introduce the spot price probability density function. In many applications including the cloud spot market, the prices follow a Pareto distribution \cite{lz15}. The PDF of the spot price is chosen to be

\begin{equation}\label{pdf}
f_{\pi}(\pi)=\frac{\alpha (\frac{1}{\bar{\pi}-\underline{\pi}})^{\alpha}}{\theta(\frac{1}{\bar{\pi}-\pi})^{\alpha+1}}
\end{equation}

which behaves like a Pareto Distribution, the random variable $\pi$ is bounded by $\underline{\pi}$ and $\bar{\pi}$. 

\subsection{Simulation Set Up}
In this section, we consider a job that needs one hour ( i.e., $t_{e}=1h=3600s$) , the deadline $t_{s}$ is $2000s$, the length of one time slot is 5 minutes (i.e., $t_{k}=5min = 300s$), and the recovery time $t_{r}=10s$. We assume that  the PDF of the  price in spot instances is drawn from the distribution shown  in (\ref{pdf}). We set $\alpha=3 $ and $\theta = 0.983$. The on-demand price $\bar{\pi}$ is 0.35 and the provider's marginal cost of running a spot instance $\underline{\pi}$ is 0.0321.

We use the above parameters to do the numerical evaluations and illustrate the tradeoff of different bidding strategies by comparing the bidding prices, portion of job that runs on on-demand instance, expected total cost and percentage of late job. 

\begin{color}{black}  Recall that $x^{*} = (q^{*}, p^{*})$ is the user's optimal decisions. We use closed-form solutions for our OTR-EG and PR, and convex approximation algorithm for the OTR-P for incomplete job and violating the deadline. we generate the random spot price $\pi_{t}$ for each time slot according to the distribution of spot price we introduced, a one-hour count-down program will be run for each bidding strategy based on the associated optimal solution $x^{*} = (q^{*}, p^{*})$, that is, $q^{*}$ portion of the job will be run on on-demand instance, and the rest will be run on spot instance with bidding price $p^{*}$. Note that after the job starts running, if the random spot price $\pi_{t}$ is higher than the bid price $p^{*}$, the job with one-time request will get interrupted and not resumed. However, the job with PR will get resumed, where a recovery time will be added, when the random spot price $\pi_{t}$ is lower than its bid price $p^{*}$ again. 

When we consider the expected total cost, to be consistent and get insight, besides running cost, penalty for incomplete job and violating the deadline will be added for the strategies OTR-EG and OTR-P; penalty for violating the deadline will be added for PR. We run the simulation for 1000 times, and an average is taken to get the expected total cost for each bidding strategy. 

\end{color}

\begin{color}{black}
\subsection{Comparison among different request mechanisms}
In this subsection, we compare the cost, bidding price in different scenarios for a fixed $t_{e}=3600$s. We vary the deadline $t_{s}$, from 1850s to 8000s in the steps of 50s.

\begin{color}{black}
\begin{figure*}[htpb]
	\centering
	\subfigure[Impact of the deadline on bidding price]{
%
%
\begin{tikzpicture}[scale=.43]

\begin{axis}[%
width=2.3in,
height=1.7in,
at={(0.758in,0.5in)},
scale only axis,
xmin=1850,
xmax=8000,
xlabel style={font=\color{white!15!black}},
xlabel={Deadline (s)},
ymin=0.05,
ymax=0.35,
ylabel style={font=\color{white!15!black}},
ylabel={Bidding Price on Spot Instance (\$)},
axis background/.style={fill=white},
legend style={at={(0.55,0.55)}, anchor=south west, legend cell align=left, align=left, draw=white!15!black}
]
\addplot [color=black, line width=2.0pt]
  table[row sep=crcr]{%
1850	0.110033727524895\\
1900	0.108716128896642\\
1950	0.107459035387309\\
2000	0.106258050796762\\
2050	0.105109208421712\\
2100	0.104008918890404\\
2150	0.102953925597062\\
2200	0.101941266481765\\
2250	0.100968240458705\\
2300	0.100032379747182\\
2350	0.0991314236044884\\
2400	0.0982632977401605\\
2450	0.0977676298078499\\
2500	0.0983545956358372\\
2550	0.098927654139942\\
2600	0.0994874443779059\\
2650	0.100034561185277\\
2700	0.100569559400895\\
2750	0.101092957425199\\
2800	0.101605240419274\\
2850	0.102106863213086\\
2900	0.102598252803134\\
2950	0.103079810627713\\
3000	0.103551914619782\\
3050	0.104014921088761\\
3100	0.104469166294392\\
3150	0.104914968037716\\
3200	0.105352626944112\\
3250	0.105782427729235\\
3300	0.106204640328093\\
3350	0.106619520904368\\
3400	0.107027312791324\\
3450	0.10742824734716\\
3500	0.10782254470773\\
3550	0.10821041457348\\
3600	0.108592056808196\\
3650	0.108781617008749\\
3700	0.108781616\\
3750	0.108781616\\
3800	0.108781616\\
3850	0.108781616\\
3900	0.108781616\\
3950	0.108781616\\
4000	0.108781616\\
4050	0.108781616\\
4100	0.108781616\\
4150	0.108781616\\
4200	0.108781616\\
4250	0.108781616\\
4300	0.108781616\\
4350	0.108781616\\
4400	0.108781616\\
4450	0.108781616\\
4500	0.108781616\\
4550	0.108781616\\
4600	0.108781616\\
4650	0.108781616\\
4700	0.108781616\\
4750	0.108781616\\
4800	0.108781616\\
4850	0.108781616\\
4900	0.108781616\\
4950	0.108781616\\
5000	0.108781616\\
5050	0.108781616\\
5100	0.108781616\\
5150	0.108781616\\
5200	0.108781616\\
5250	0.108781616\\
5300	0.108781616\\
5350	0.108781616\\
5400	0.108781616\\
5450	0.108781616\\
5500	0.108781616\\
5550	0.108781616\\
5600	0.108781616\\
5650	0.108781616\\
5700	0.108781616\\
5750	0.108781616\\
5800	0.108781616\\
5850	0.108781616\\
5900	0.108781616\\
5950	0.108781616\\
6000	0.108781616\\
6050	0.108781616\\
6100	0.108781616\\
6150	0.108781616\\
6200	0.108781616\\
6250	0.108781616\\
6300	0.108781616\\
6350	0.108781616\\
6400	0.108781616\\
6450	0.108781616\\
6500	0.108781616\\
6550	0.108781616\\
6600	0.108781616\\
6650	0.108781616\\
6700	0.108781616\\
6750	0.108781616\\
6800	0.108781616\\
6850	0.108781616\\
6900	0.108781616\\
6950	0.108781616\\
7000	0.108781616\\
7050	0.108781616\\
7100	0.108781616\\
7150	0.108781616\\
7200	0.108781616\\
7250	0.108781616\\
7300	0.108781616\\
7350	0.108781616\\
7400	0.108781616\\
7450	0.108781616\\
7500	0.108781616\\
7550	0.108781616\\
7600	0.108781616\\
7650	0.108781616\\
7700	0.108781616\\
7750	0.108781616\\
7800	0.108781616\\
7850	0.108781616\\
7900	0.108781616\\
7950	0.108781616\\
8000	0.108781616\\
};
\addlegendentry{OTR- EG}

\addplot [color=blue, dashdotted, line width=2.0pt]
  table[row sep=crcr]{%
1850	0.206027381647682\\
1900	0.210133421183954\\
1950	0.214552317296198\\
2000	0.218624346216263\\
2050	0.22307521332696\\
2100	0.227120072649957\\
2150	0.231590683870257\\
2200	0.235616313938895\\
2250	0.240099124932982\\
2300	0.24458908771301\\
2350	0.248604446622568\\
2400	0.253099478045252\\
2450	0.257103295850817\\
2500	0.261601219193367\\
2550	0.266097652487751\\
2600	0.270100808086757\\
2650	0.274602328381144\\
2700	0.279098498347174\\
2750	0.283098940658191\\
2800	0.28759839274533\\
2850	0.291600497787045\\
2900	0.296095959308023\\
2950	0.30110018015566\\
3000	0.304101114665732\\
3050	0.308101053959692\\
3100	0.317099847190676\\
3150	0.317099823597073\\
3200	0.317099800003469\\
3250	0.317600087694378\\
3300	0.318600655902257\\
3350	0.318100349012991\\
3400	0.31810033050848\\
3450	0.317600004035702\\
3500	0.317099658441835\\
3550	0.317599962206364\\
3600	0.318600557863134\\
3650	0.318600557863134\\
3700	0.327108582656051\\
3750	0.327102932957975\\
3800	0.327111217661668\\
3850	0.327102292558315\\
3900	0.327114084761922\\
3950	0.327098748729554\\
4000	0.327057313440273\\
4050	0.327065223376709\\
4100	0.327098483505019\\
4150	0.327101958857826\\
4200	0.327107327251517\\
4250	0.327100878417338\\
4300	0.327074834813659\\
4350	0.327069188202599\\
4400	0.327071325519257\\
4450	0.327127543024274\\
4500	0.327096300987914\\
4550	0.327096484441568\\
4600	0.327100199489121\\
4650	0.327100619563311\\
4700	0.327102534538227\\
4750	0.327123925675868\\
4800	0.327101638330397\\
4850	0.327099724158849\\
4900	0.3270993416019\\
4950	0.327098638794492\\
5000	0.327100326039238\\
5050	0.327100265181619\\
5100	0.327099713955852\\
5150	0.327099072326287\\
5200	0.327098737290889\\
5250	0.327098832262468\\
5300	0.327097809933924\\
5350	0.327096004126339\\
5400	0.327093527603332\\
5450	0.327091949208085\\
5500	0.327090969693642\\
5550	0.327095608205646\\
5600	0.327090317238394\\
5650	0.327086127112208\\
5700	0.327087636305296\\
5750	0.327089948715351\\
5800	0.327084309877343\\
5850	0.327127696717587\\
5900	0.327115825504987\\
5950	0.327101239183121\\
6000	0.327122931867526\\
6050	0.327121893336132\\
6100	0.327106332885454\\
6150	0.327085920629351\\
6200	0.327086210903338\\
6250	0.327098707703058\\
6300	0.327101783116659\\
6350	0.32709992940653\\
6400	0.327096735834179\\
6450	0.327095513768334\\
6500	0.327099526196088\\
6550	0.327100045144128\\
6600	0.327095294886858\\
6650	0.327101117323021\\
6700	0.327107776897216\\
6750	0.327111742846533\\
6800	0.327113873670725\\
6850	0.327118430390703\\
6900	0.327129148212503\\
6950	0.327127700228331\\
7000	0.327088079662897\\
7050	0.327150455859752\\
7100	0.327142034212683\\
7150	0.327098918778377\\
7200	0.327052828655357\\
7250	0.327137858666183\\
7300	0.327096341743611\\
7350	0.327101597636983\\
7400	0.327113052403095\\
7450	0.327103508218333\\
7500	0.327102552165704\\
7550	0.327101311011278\\
7600	0.327100146562062\\
7650	0.327099243068927\\
7700	0.327098507292256\\
7750	0.327096767671305\\
7800	0.327094655874246\\
7850	0.327093850931156\\
7900	0.327095191640586\\
7950	0.32710198328145\\
8000	0.327099881201927\\
};
\addlegendentry{OTR-P}

\addplot [color=red, dashed, line width=2.0pt]
  table[row sep=crcr]{%
1850	0.35\\
1900	0.35\\
1950	0.35\\
2000	0.35\\
2050	0.35\\
2100	0.35\\
2150	0.35\\
2200	0.35\\
2250	0.35\\
2300	0.35\\
2350	0.35\\
2400	0.35\\
2450	0.35\\
2500	0.35\\
2550	0.35\\
2600	0.35\\
2650	0.35\\
2700	0.35\\
2750	0.35\\
2800	0.35\\
2850	0.35\\
2900	0.35\\
2950	0.35\\
3000	0.35\\
3050	0.35\\
3100	0.35\\
3150	0.35\\
3200	0.35\\
3250	0.35\\
3300	0.35\\
3350	0.35\\
3400	0.35\\
3450	0.35\\
3500	0.35\\
3550	0.35\\
3600	0.18528346191978\\
3650	0.14930586447781\\
3700	0.134417707633972\\
3750	0.128465226221085\\
3800	0.121595855116844\\
3850	0.116191871023178\\
3900	0.111738940834999\\
3950	0.107955833637714\\
4000	0.104671218752861\\
4050	0.101772694766521\\
4100	0.0991823083043098\\
4150	0.0968438207626343\\
4200	0.094715091085434\\
4250	0.0927639071345329\\
4300	0.0909648781657219\\
4350	0.0892977673768997\\
4400	0.0877460897326469\\
4450	0.0862962782382965\\
4500	0.0849370396971703\\
4550	0.0836587862610817\\
4600	0.0824534459471703\\
4650	0.0813140268266201\\
4700	0.0802345033347607\\
4750	0.0792096457362175\\
4800	0.0782348306417465\\
4850	0.077306041008234\\
4900	0.0764196577072143\\
4950	0.0755725163698196\\
5000	0.0747617368519306\\
5050	0.0739847990274429\\
5100	0.0732393533051014\\
5150	0.072523353266716\\
5200	0.0718348851323128\\
5250	0.0711722435534\\
5300	0.0705338558197022\\
5350	0.0699182818591595\\
5400	0.0693242142379284\\
5450	0.0687504402637482\\
5500	0.0681958609342575\\
5550	0.0676594151437283\\
5600	0.0671401744246483\\
5650	0.0666372671544552\\
5700	0.0661498596072197\\
5750	0.0656771749019623\\
5800	0.0652185498476029\\
5850	0.0647732833564282\\
5900	0.0643407690823078\\
5950	0.0639204196274281\\
6000	0.0635117139130831\\
6050	0.0631141308605671\\
6100	0.0627271872878075\\
6150	0.0623504568576813\\
6200	0.061983494284749\\
6250	0.061625911128521\\
6300	0.0612773189485073\\
6350	0.0609373766750097\\
6400	0.0606057432383299\\
6450	0.060282096517086\\
6500	0.0599661522865295\\
6550	0.0596576073735952\\
6600	0.0593561965018511\\
6650	0.059061654394865\\
6700	0.0587737536728382\\
6750	0.0584922574818134\\
6800	0.0582169384419918\\
6850	0.0579475786477327\\
6900	0.0576839886158705\\
6950	0.0574259788632393\\
7000	0.0571733599066734\\
7050	0.0569259517371654\\
7100	0.0566836027681827\\
7150	0.0564461424648762\\
7200	0.0562134192407131\\
7250	0.0559853004574776\\
7300	0.0557616155803203\\
7350	0.0555422603935003\\
7400	0.0553270927846432\\
7450	0.0551159990638495\\
7500	0.0549088465929031\\
7550	0.0547055311560631\\
7600	0.0545059390634298\\
7650	0.0543099660992622\\
7700	0.0541175080478191\\
7750	0.0539284796416759\\
7800	0.0537427766650915\\
7850	0.0535603138506413\\
7900	0.0533810059309006\\
7950	0.0532047581642866\\
8000	0.0530315042316914\\
};
\addlegendentry{PR}

\end{axis}
\end{tikzpicture}%
		\label{fig:1.1}
	}
	\subfigure[Impact of the deadline on portion of job run on on-demand instance]{
%
%
\begin{tikzpicture}[scale=.43]

\begin{axis}[%
width=2.3in,
height=1.7in,
at={(0.758in,0.58in)},
scale only axis,
xmin=1800,
xmax=3800,
xlabel style={font=\color{white!15!black}},
xlabel={Deadline (s)},
ymin=0,
ymax=0.5,
ylabel style={font=\color{white!15!black}},
ylabel={Portion of job run on on-demand instance},
axis background/.style={fill=white},
legend style={legend cell align=left, align=left, draw=white!15!black}
]
\addplot [color=black, line width=2.0pt]
  table[row sep=crcr]{%
1850	0.493314950579016\\
1900	0.479823165522218\\
1950	0.466333141017925\\
2000	0.452844564875549\\
2050	0.439357156503412\\
2100	0.425870663599268\\
2150	0.412384859176516\\
2200	0.398899538878259\\
2250	0.385414518822423\\
2300	0.371929633098531\\
2350	0.358444732459164\\
2400	0.344959682160414\\
2450	0.33130801635635\\
2500	0.317138743214011\\
2550	0.302982660529829\\
2600	0.288838842898803\\
2650	0.274706450900876\\
2700	0.260584721147437\\
2750	0.246472957768494\\
2800	0.232370525014784\\
2850	0.218276840797884\\
2900	0.204191371093089\\
2950	0.190113625021147\\
3000	0.17604315052405\\
3050	0.161979530542977\\
3100	0.147922379688499\\
3150	0.133871341227733\\
3200	0.119826084467606\\
3250	0.105786302387889\\
3300	0.0917517095376705\\
3350	0.0777220401608878\\
3400	0.0636970465094957\\
3450	0.0496764973303107\\
3500	0.0356601765131922\\
3550	0.0216478818410318\\
3600	0.00763942390085437\\
3650	0\\
3700	0\\
3750	0\\
3800	0\\
};
\addlegendentry{ORT- EG}

\addplot [color=blue, dashdotted, line width=2.0pt]
  table[row sep=crcr]{%
1850	0.486111111111111\\
1900	0.472222222222222\\
1950	0.458333333333333\\
2000	0.444444444444444\\
2050	0.430555555555556\\
2100	0.416666666666667\\
2150	0.402777777777778\\
2200	0.388888888888889\\
2250	0.375\\
2300	0.361111111111111\\
2350	0.347222222222222\\
2400	0.333333333333333\\
2450	0.319444444444444\\
2500	0.305555555555556\\
2550	0.291666666666667\\
2600	0.277777777777778\\
2650	0.263888888888889\\
2700	0.25\\
2750	0.236111111111111\\
2800	0.222222222222222\\
2850	0.208333333333333\\
2900	0.194444444444444\\
2950	0.180555555555556\\
3000	0.166666666666667\\
3050	0.152777777777778\\
3100	0.138888888888889\\
3150	0.125\\
3200	0.111111111111111\\
3250	0.0972222222222222\\
3300	0.0833333333333334\\
3350	0.0694444444444444\\
3400	0.0555555555555556\\
3450	0.0416666666666666\\
3500	0.0277777777777778\\
3550	0.0138888888888888\\
3600	0\\
3650	0\\
3700	0\\
3750	0\\
3800	0\\
};
\addlegendentry{OTR- P}
\addplot [color=red, dashed, line width=2.0pt]
  table[row sep=crcr]{%
1850	0.486111111111111\\
1900	0.472222222222222\\
1950	0.458333333333333\\
2000	0.444444444444444\\
2050	0.430555555555556\\
2100	0.416666666666667\\
2150	0.402777777777778\\
2200	0.388888888888889\\
2250	0.375\\
2300	0.361111111111111\\
2350	0.347222222222222\\
2400	0.333333333333333\\
2450	0.319444444444444\\
2500	0.305555555555556\\
2550	0.291666666666667\\
2600	0.277777777777778\\
2650	0.263888888888889\\
2700	0.25\\
2750	0.236111111111111\\
2800	0.222222222222222\\
2850	0.208333333333333\\
2900	0.194444444444444\\
2950	0.180555555555556\\
3000	0.166666666666667\\
3050	0.152777777777778\\
3100	0.138888888888889\\
3150	0.125\\
3200	0.111111111111111\\
3250	0.0972222222222222\\
3300	0.0833333333333334\\
3350	0.0694444444444444\\
3400	0.0555555555555556\\
3450	0.0416666666666666\\
3500	0.0277777777777778\\
3550	0.0138888888888888\\
3600	0\\
3650	0\\
3700	0\\
3750	0\\
3800	0\\
};
\addlegendentry{PR}
\end{axis}
\end{tikzpicture}%
		\label{fig:1.2}
	}
	\subfigure[Impact of the deadline on ratio of job portion run on on-demand instance compared to OTR-EG]{
%
%
\begin{tikzpicture}[scale=.43]

\begin{axis}[%
width=2.3in,
height=1.7in,
at={(0.874in,0.58in)},
scale only axis,
unbounded coords=jump,
xmin=1800,
xmax=3800,
xlabel style={font=\color{white!15!black}},
xlabel={Deadline (s)},
ymin=0,
ymax=1.1,
ylabel style={font=\color{white!15!black}, align=center},
ylabel={Ratio of job portion run on on-demand instance\\[1ex]w.r.t. OTR-EG},
axis background/.style={fill=white},
legend style={at={(0.1,0.1)}, anchor=south west, legend cell align=left, align=left, draw=white!15!black}
]
\addplot [color=blue, dashdotted, line width=2.0pt]
  table[row sep=crcr]{%
1850	0.985397078561172\\
1900	0.984158865502621\\
1950	0.98284529453101\\
2000	0.981450322952616\\
2050	0.97996709324618\\
2100	0.978387811795221\\
2150	0.976703602993761\\
2200	0.974904333011963\\
2250	0.972978395172442\\
2300	0.970912449494032\\
2350	0.968691100131538\\
2400	0.966296499479977\\
2450	0.964191714880979\\
2500	0.963475961526912\\
2550	0.962651348287146\\
2600	0.961705063591809\\
2650	0.96062137610343\\
2700	0.95938088349605\\
2750	0.957959498878918\\
2800	0.956327065182143\\
2850	0.954445430728229\\
2900	0.952265727016442\\
2950	0.949724437348832\\
3000	0.946737582067401\\
3050	0.943191879033396\\
3100	0.938930871592025\\
3150	0.933732334744883\\
3200	0.927269814454708\\
3250	0.919043581519045\\
3300	0.908248290448683\\
3350	0.893497446807773\\
3400	0.872184168653308\\
3450	0.838760156329364\\
3500	0.778957943954698\\
3550	0.641581887358771\\
3600	0\\
3650	nan\\
3700	nan\\
3750	nan\\
3800	nan\\
};
\addlegendentry{OTR-P/OTR-EG}

\addplot [color=red, dashed, line width=2.0pt]
  table[row sep=crcr]{%
1850	0.985397078561172\\
1900	0.984158865502621\\
1950	0.98284529453101\\
2000	0.981450322952616\\
2050	0.97996709324618\\
2100	0.978387811795221\\
2150	0.976703602993761\\
2200	0.974904333011963\\
2250	0.972978395172442\\
2300	0.970912449494032\\
2350	0.968691100131538\\
2400	0.966296499479977\\
2450	0.964191714880979\\
2500	0.963475961526912\\
2550	0.962651348287146\\
2600	0.961705063591809\\
2650	0.96062137610343\\
2700	0.95938088349605\\
2750	0.957959498878918\\
2800	0.956327065182143\\
2850	0.954445430728229\\
2900	0.952265727016442\\
2950	0.949724437348832\\
3000	0.946737582067401\\
3050	0.943191879033396\\
3100	0.938930871592025\\
3150	0.933732334744883\\
3200	0.927269814454708\\
3250	0.919043581519045\\
3300	0.908248290448683\\
3350	0.893497446807773\\
3400	0.872184168653308\\
3450	0.838760156329364\\
3500	0.778957943954698\\
3550	0.641581887358771\\
3600	0\\
3650	nan\\
3700	nan\\
3750	nan\\
3800	nan\\
};
\addlegendentry{PR/OTR-EG}

\end{axis}
\end{tikzpicture}%
		\label{fig:1.3}
	}
	\subfigure[Impact of the deadline on expected total cost]{
		\input{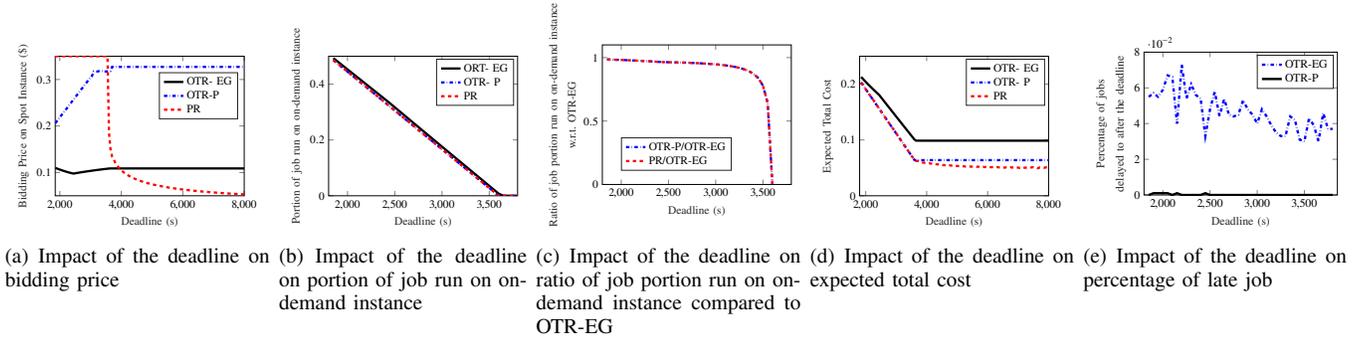}
		\label{fig:1.4}
	}
	\subfigure[Impact of the deadline on percentage of late job]{
%
%
\begin{tikzpicture}[scale=.44]

\begin{axis}[%
width=2.3in,
height=1.7in,
at={(0.999in,0.58in)},
scale only axis,
xmin=1800,
xmax=3850,
xlabel style={font=\color{white!15!black}},
xlabel={Deadline (s)},
ymin=0,
ymax=0.08,
ylabel style={font=\color{white!15!black}, align=center},
ylabel={Percentage of jobs\\[1ex]delayed to after the deadline},
axis background/.style={fill=white},
legend style={legend cell align=left, align=left, draw=white!15!black}
]
\addplot [color=blue, dashdotted, line width=2.0pt]
  table[row sep=crcr]{%
1850	0.055\\
1900	0.057\\
1950	0.055\\
2000	0.059\\
2050	0.067\\
2100	0.066\\
2150	0.04\\
2200	0.073\\
2250	0.054\\
2300	0.062\\
2350	0.056\\
2400	0.054\\
2450	0.033\\
2500	0.048\\
2550	0.058\\
2600	0.044\\
2650	0.054\\
2700	0.045\\
2750	0.044\\
2800	0.045\\
2850	0.053\\
2900	0.048\\
2950	0.045\\
3000	0.04\\
3050	0.048\\
3100	0.044\\
3150	0.039\\
3200	0.035\\
3250	0.033\\
3300	0.033\\
3350	0.044\\
3400	0.042\\
3450	0.031\\
3500	0.03\\
3550	0.043\\
3600	0.039\\
3650	0.03\\
3700	0.046\\
3750	0.037\\
3800	0.037\\
};
\addlegendentry{OTR-EG}

\addplot [color=black, line width=2.0pt]
  table[row sep=crcr]{%
1850	0\\
1900	0.001\\
1950	0.001\\
2000	0.001\\
2050	0.001\\
2100	0\\
2150	0.001\\
2200	0\\
2250	0\\
2300	0\\
2350	0\\
2400	0\\
2450	0.001\\
2500	0\\
2550	0\\
2600	0\\
2650	0\\
2700	0\\
2750	0\\
2800	0\\
2850	0\\
2900	0\\
2950	0\\
3000	0\\
3050	0\\
3100	0\\
3150	0\\
3200	0\\
3250	0\\
3300	0\\
3350	0\\
3400	0\\
3450	0\\
3500	0\\
3550	0\\
3600	0\\
3650	0\\
3700	0\\
3750	0\\
3800	0\\
};
\addlegendentry{OTR-P}

\end{axis}
\end{tikzpicture}%
		\label{fig:1.5}
	}
	\caption{\small Impact of Deadline with $t_{e}=3600s$, $c_{I}=\bar{\pi}/3$, $c_{s}=\bar{\pi}/10$, and $t_{r}=10s$ }
	\label{fig:1}
\end{figure*}
\end{color}

 \begin{color}{black}  From Fig. \ref{fig:1.1}, we can see that as the deadline increases, the bidding price in OTR-EG decreases first, and then increases. From Proposition 1 (cf. (10)) we can see that the optimal bid price is determined by the minimum of two terms: the first term (cf. (11)) is to get the trade-off between the bid price and portion of job running on on-demand instance, and the second term is to guarantee that the job can continue running without interruption, and finish before the deadline (cf. (12)). Thus, the bid price decreases for a while since lower price can lead to lower total cost while the job will not get any interruption. After that we see the bid price increases with the deadline. The reason is that with a longer deadline, the portion of job that runs on spot instance becomes bigger (see Fig.~\ref{fig:1.2}). In order to guarantee that the job running on spot instance can continue running without any interruption, the bid price should be higher. 
 
 \begin{color}{black}
Fig. \ref{fig:1.1} also shows that when the deadline $t_{s}$ is smaller than the execution time $t_{e}$, the user needs to bid with the upperbound of the spot price $\bar{\pi}$ for PR, and bid lowest for OTR-EG compared to PR and OTR-P. Note that with PR, if a job is interrupted, there will be a recovery time $t_{r}$ before it get resumed. The user not only needs to pay for it, he also needs to allocate more $t_{r}$ amount of job to on-demand instance in order to finish the job. That is to say, although we do not put any penalty on our PR model, $(\pi+\bar{\pi}) t_{r}$ will be added to the total cost for each interruption, which is even larger than the cost to run $t_{r}$ on on-demand instance, i.e., $\bar{\pi} t_{r}$. Thus the users needs to bid the upper-bound of the spot price. In terms of OTR-P, the user needs to find a trade-off between the penalty and bid price, thus the bid price is higher than that of OTR-EG but lower than that of PR.\end{color} When the deadline $t_{s}$ is larger than the execution time $t_{e}$, the user's optimal strategy is to rely on spot instance: there will be no job running on the on-demand instance (see Figure 3b) and the optimal bid price will not be impacted by the increase of deadline (cf. (14)). 

Another interesting observation from Fig. \ref{fig:1.1} is that the bidding price is higher in PR for smaller deadline. It decreases and gets lower than that of OTR-EG and OTR-P when the deadline is longer than the execution time. This is because unlike one-time request, in PR, the interrupted job can get resumed in the spot instances and a recovery time will be included, which will induce more cost when the execution time is shorter than the deadline. However, when the execution time is longer than the deadline, with the increase of deadline, smaller bidding price can save the cost while guaranteeing the job can be completed before the deadline. \end{color}

\begin{color}{black}  Figure 3b shows that, as we would expect from Lemma 8, when the deadline is smaller than the execution time, the differences among the portions of job that runs on on-demand instance by using OTR-EG and PR are minimal, which is not larger than $\frac{t_{k}}{t_{e}} = \frac{300}{3600} \approx 0.0833$. We can also see that the user will put the same portion of the job $\frac{t_{s}}{t_{e}}$ running on the spot instance by using OTR-P and PR. The intuition is that because spot instance price is not higher than that of on-demand instance, the user tends to run as much job as possible on the spot instance, which is $\frac{t_{s}}{t_{e}}$. \end{color}

\begin{color}{black} In order to show the difference more clearly, we plot Figure 3c to show the ratio of portion of job runs on on-demand instance compared to OTR-EG. The comparison shows that when the execution time is smaller than the deadline, the user put less job on on-demand instance by using OTR-P and PR compared to OTR-EG, and the portion of the job put on on-demand instance decreases much faster for OTR-P and PR compared to OTR-EG. 
\end{color}

\begin{color}{black} 

Figure 3d shows that when the deadline $t_{s}$ is shorter than the execution time $t_{e}$, the expected total costs obtained with different strategies are decreasing. This is intuitive, as the deadline increases, the portions of the job will be run on on-demand instance, whose price is not less than that of spot instance, are getting smaller (which can be verified in Figure 3b),  thus the expected total costs are becoming lower.

However, when the deadline $t_{s}$ is longer than the execution time $t_{e}$, the user's optimal strategy is to rely only on spot instance to finish the job and there will be no job running on on-demand instance (see Figure 3b, Proposition 2 and Proposition 4). Note that the bidding price obtained from PR decreases with the deadline and gets smaller than the bid prices in OTR-EG and OTR-P, which are not impacted by the increase of the deadline (see Figure 3a). Therefore, the expected total costs obtained from PR is decreasing while that of OTR-EG and OTR-P do not change. Another interesting observation is that the expected total cost of PR is lower than that of the other two bidding strategies, indicating that users can further lower the total running cost by using PR.
  \end{color}

\begin{color}{black}
In Figure 3e, we plot the percentage of jobs that are delayed to after the deadline for OTR-EG and OTR-P. We can see that with OTR-EG, the percentage of late jobs are decreasing with the increase of deadline and there are always some job being delayed after deadline. However, for OTR-P, almost all the jobs can be finished before deadline. Thus, adding penalty can reduce the fraction of jobs which are delayed.
\end{color}

\end{color}

\subsection{OTR-P: Impact of the Penalty Parameters $c_s$, and $c_I$}

Recall that the penalty corresponding to the incomplete job in the spot instance is $c_I$ and the penalty corresponding to the portion of the job completed after the deadline is $c_s$. 
We, now, evaluate the impact of $c_I$ and $c_s$ on the bidding prices and the expected total cost (including penalty). 

We assume that the execution time is 3600s. We consider three different scenarios of the deadlines: i) 2100s,  ii) 2700s and iii) 3300s. 

\subsubsection{Impact of penalty parameter $c_{s}$}
In order to see the impact of penalty coefficient for late but completed job $c_{s}$ on the bidding price and expected total cost, we fix the execution time without any interruptions $t_{e}$ as 3600s and penalty coefficient for incomplete job $c_{I}$ as $\bar{\pi}/3$ , and change $c_{s}$ from 0 to $c_{I}=\bar{\pi}/3$ in the step of 0.005.

We observe for Figure \ref{fig3} that both the bid prices and expected total cost  increase with the increase in the penalty coefficient $c_{s}$. The higher bidding price is due to the fact that there is a penalty due to the completed but late job. However, compared to the bid price, the expected total cost is relatively less sensitive to the change of penalty coefficient $c_{s}$ (see Figure \ref{fig:3.2}). Finally, from Figure \ref{fig:3.2} we also observe that  the total cost decreases with the deadline. This is because a smaller portion of the job that runs  on the on-demand instance when the deadline is higher. 
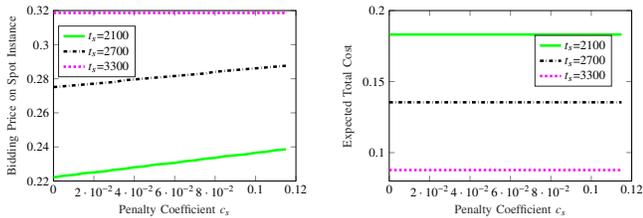
\begin{figure}[htbp]
	\centering
	\subfigure[Impact of the Penalty Coefficient $c_{s}$ on Bidding Price]{
%
%
\definecolor{mycolor1}{rgb}{1.00000,0.00000,1.00000}%
\begin{tikzpicture}[scale=.47]

\begin{axis}[%
width=2.7in,
height=1.9in,
at={(0.758in,0.559in)},
scale only axis,
xmin=0,
xmax=0.12,
xlabel style={font=\color{white!15!black}},
xlabel={Penalty Coefficient $c_{s}$},
ymin=0.22,
ymax=0.32,
ylabel style={font=\color{white!15!black}},
ylabel={Bidding Price on Spot Instance},
axis background/.style={fill=white},
legend style={at={(0.02,0.61)}, anchor=south west, legend cell align=left, align=left, draw=white!15!black}
]
\addplot [color=green, line width=2.0pt]
  table[row sep=crcr]{%
0	0.222131849087204\\
0.005	0.223074645749137\\
0.01	0.223614865115977\\
0.015	0.224563681640495\\
0.02	0.22510043268563\\
0.025	0.225634902012781\\
0.03	0.226589647357344\\
0.035	0.227120072649957\\
0.04	0.228079424536832\\
0.045	0.228608922933981\\
0.05	0.229571746130589\\
0.055	0.230098008887377\\
0.06	0.230625706464554\\
0.065	0.231590748697199\\
0.07	0.232113897106943\\
0.075	0.233084286035511\\
0.08	0.233606757206276\\
0.085	0.234579476458951\\
0.09	0.235098799593957\\
0.095	0.235615094028376\\
0.1	0.236592856110018\\
0.105	0.237109203238159\\
0.11	0.238088522845883\\
0.115	0.238603359251864\\
};
\addlegendentry{$t_{s}$=2100}

\addplot [color=black, dashdotted, line width=2.0pt]
  table[row sep=crcr]{%
0	0.275102398881797\\
0.005	0.275599893145128\\
0.01	0.276099970578959\\
0.015	0.276600018103093\\
0.02	0.277100026873178\\
0.025	0.2776000089603\\
0.03	0.278099974448443\\
0.035	0.279098498347174\\
0.04	0.279598425338177\\
0.045	0.280098272889781\\
0.05	0.280598790549048\\
0.055	0.281098861011519\\
0.06	0.281598931094829\\
0.065	0.282098991159489\\
0.07	0.282599050502484\\
0.075	0.283099091681255\\
0.08	0.284098429577433\\
0.085	0.284598546158523\\
0.09	0.285098693644105\\
0.095	0.285598835359443\\
0.1	0.28609898610122\\
0.105	0.286599101397444\\
0.11	0.287099208445807\\
0.115	0.2875985799346\\
};
\addlegendentry{$t_{s}$=2700}

\addplot [color=mycolor1, dotted, line width=2.0pt]
  table[row sep=crcr]{%
0	0.318600642430586\\
0.005	0.318600644355111\\
0.01	0.318600646279635\\
0.015	0.318600648204159\\
0.02	0.318600650128684\\
0.025	0.318600652053208\\
0.03	0.318600653977732\\
0.035	0.318600655902257\\
0.04	0.318600657826781\\
0.045	0.318600659751306\\
0.05	0.31860066167583\\
0.055	0.318600663600354\\
0.06	0.318600665524879\\
0.065	0.318600667449403\\
0.07	0.318600669373927\\
0.075	0.318600671298452\\
0.08	0.318600673222976\\
0.085	0.318600675147501\\
0.09	0.318600677072025\\
0.095	0.318600678996549\\
0.1	0.318600680921074\\
0.105	0.318600682845598\\
0.11	0.318600684770122\\
0.115	0.318600686694647\\
};
\addlegendentry{$t_{s}$=3300}

\end{axis}
\end{tikzpicture}%
		\label{fig:3.1}
	}
	\subfigure[Impact of the Penalty Coefficient $c_{s}$ on Expected Total Cost]{
%
%
\definecolor{mycolor1}{rgb}{1.00000,0.00000,1.00000}%
\begin{tikzpicture}[scale=.47]

\begin{axis}[%
width=2.7in,
height=1.9in,
at={(0.74in,0.545in)},
scale only axis,
xmin=0,
xmax=0.12,
xlabel style={font=\color{white!15!black}},
xlabel={Penalty Coefficient $c_{s}$},
ymin=0.08,
ymax=0.2,
ylabel style={font=\color{white!15!black}},
ylabel={Expected Total Cost},
axis background/.style={fill=white},
legend style={at={(0.6,0.555)}, anchor=south west, legend cell align=left, align=left, draw=white!15!black}
]
\addplot [color=green, line width=2.0pt]
  table[row sep=crcr]{%
0	0.183100462090976\\
0.005	0.183100572934571\\
0.01	0.183100678319433\\
0.015	0.183100778664401\\
0.02	0.183100873760414\\
0.025	0.183100964390287\\
0.03	0.18310105019892\\
0.035	0.183101131833979\\
0.04	0.183101209284461\\
0.045	0.183101282717072\\
0.05	0.183101352543117\\
0.055	0.183101418506512\\
0.06	0.183101481263819\\
0.065	0.183101540552832\\
0.07	0.183101596820212\\
0.075	0.183101650100947\\
0.08	0.183101700474724\\
0.085	0.183101748296989\\
0.09	0.183101793328326\\
0.095	0.183101836092963\\
0.1	0.183101876388651\\
0.105	0.183101914537778\\
0.11	0.183101950581949\\
0.115	0.183101984561689\\
};
\addlegendentry{$t_{s}$=2100}

\addplot [color=black, dashdotted, line width=2.0pt]
  table[row sep=crcr]{%
0	0.135417487259426\\
0.005	0.135417488171346\\
0.01	0.135417489024792\\
0.015	0.135417489822958\\
0.02	0.135417490568794\\
0.025	0.135417491265707\\
0.03	0.135417491916561\\
0.035	0.135417492521224\\
0.04	0.135417493083993\\
0.045	0.135417493608626\\
0.05	0.135417494097907\\
0.055	0.135417494553231\\
0.06	0.135417494976924\\
0.065	0.135417495371125\\
0.07	0.135417495737911\\
0.075	0.135417496078147\\
0.08	0.135417496392482\\
0.085	0.135417496683524\\
0.09	0.135417496953574\\
0.095	0.135417497204114\\
0.1	0.135417497435718\\
0.105	0.135417497650398\\
0.11	0.135417497848532\\
0.115	0.135417498031801\\
};
\addlegendentry{$t_{s}$=2700}

\addplot [color=mycolor1, dotted, line width=2.0pt]
  table[row sep=crcr]{%
0	0.0877324997885286\\
0.005	0.0877324997889014\\
0.01	0.0877324997892744\\
0.015	0.0877324997896472\\
0.02	0.0877324997900203\\
0.025	0.087732499790393\\
0.03	0.0877324997907661\\
0.035	0.0877324997911389\\
0.04	0.0877324997915119\\
0.045	0.0877324997918847\\
0.05	0.0877324997922575\\
0.055	0.0877324997926306\\
0.06	0.0877324997930033\\
0.065	0.0877324997933764\\
0.07	0.0877324997937492\\
0.075	0.0877324997941222\\
0.08	0.087732499794495\\
0.085	0.0877324997948678\\
0.09	0.0877324997952408\\
0.095	0.0877324997956136\\
0.1	0.0877324997959867\\
0.105	0.0877324997963594\\
0.11	0.0877324997967322\\
0.115	0.0877324997971053\\
};
\addlegendentry{$t_{s}$=3300}

\end{axis}
\end{tikzpicture}%
		\label{fig:3.2}
	}	
	\caption{\small Impact of Penalty Coefficient $c_{s}$ with $t_{e}=3600s$ and $c_{I}=\bar{\pi}/3$}
	\label{fig3}
\end{figure}

\subsubsection{Impact of penalty parameter $c_{I}$}

In order to investigate the impact of penalty coefficient for unfinished job $c_{I}$ on the bidding price and expected total cost, we fix the execution time without any interruptions $t_{e}$ as 3600s and penalty coefficient for late completed job $c_{s}$ as $\bar{\pi}/10$ , and vary $c_{I}$ from $c_{s}=\bar{\pi}/10$ to $\bar{\pi}/2$ in the step of $0.005$. 

We note from Fig.~\ref{fig:311} that the optimal bidding prices on spot instance increase with the increase in the penalty coefficient $c_{I}$. That is intuitive, since a larger penalty coefficient for unfinished job implies that a higher penalty for the unfinished job. Hence, the user has to bid a higher price for spot instance.  However, very small or very large $c_{I}$ will lead to a slower increase of bidding price. When $c_I$ is small and below a certain threshold $T_{1}$, the bidding price does not increase much because of the lower penalty. On the other hand, when $c_I$ exceeds a threshold $T_{2}$, the bid price becomes closer to the upper bound. Thus, increase in $c_I$ does not increase the bid price as rapidly as $c_I$ in between of $T_{1}$ and $T_{2}$. Fig.~\ref{fig:311} also suggests that with longer deadline, the bid price is higher. This is because more portion of job will be run on spot instance with longer deadline, then in order to avoid the penalty for incomplete job or completed but late job, higher price needs to be set. 

We notice that, in Fig.~\ref{fig:312}, the expected total cost is increasing with the penalty coefficient $c_{I}$ for different deadlines. The total cost decreases with the deadline. Intuitively, longer deadline will introduce less portion of job running on the on-demand instance, the price of which is much higher than the spot price. 

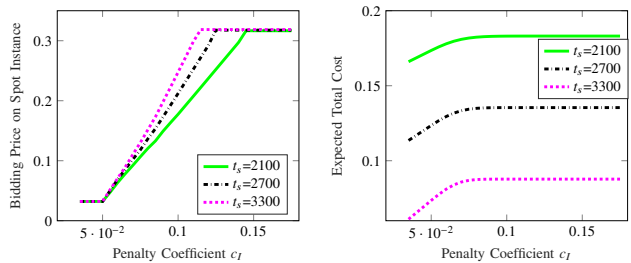
\begin{figure}[htbp]
	\centering
	\subfigure[Impact of the Penalty Coefficient $c_{I}$ on Bidding Price]{
%
%
\definecolor{mycolor1}{rgb}{1.00000,0.00000,1.00000}%
\begin{tikzpicture}[scale=.55]

\begin{axis}[%
width=2.3in,
height=2in,
at={(0.758in,0.559in)},
scale only axis,
xmin=0.02,
xmax=0.18,
xlabel style={font=\color{white!15!black}},
xlabel={Penalty Coefficient $c_{I}$},
ymin=0,
ymax=0.35,
ylabel style={font=\color{white!15!black}},
ylabel={Bidding Price on Spot Instance},
axis background/.style={fill=white},
legend style={at={(0.58,0.03)}, anchor=south west, legend cell align=left, align=left, draw=white!15!black}
]
\addplot [color=green, line width=2.0pt]
  table[row sep=crcr]{%
0.035	0.0321000002039577\\
0.04	0.0321000009983002\\
0.045	0.0321000016188483\\
0.05	0.0321000048747399\\
0.055	0.0474107270492772\\
0.06	0.0636124081265234\\
0.065	0.0782622262061372\\
0.07	0.0921499301878548\\
0.075	0.106610908281389\\
0.08	0.120633573315623\\
0.085	0.13295247114423\\
0.09	0.14910620508288\\
0.095	0.16348453637121\\
0.1	0.177444868408388\\
0.105	0.192354570602285\\
0.11	0.207192795273196\\
0.115	0.222129972007098\\
0.12	0.237109653830079\\
0.125	0.252103066947768\\
0.13	0.267103388713411\\
0.135	0.282098947864243\\
0.14	0.297097853593362\\
0.145	0.316600062234687\\
0.15	0.316600062234687\\
0.155	0.316600062234687\\
0.16	0.316600062234687\\
0.165	0.316600062234687\\
0.17	0.316600062234687\\
0.175	0.316600062234687\\
};
\addlegendentry{$t_{s}$=2100}

\addplot [color=black, dashdotted, line width=2.0pt]
  table[row sep=crcr]{%
0.035	0.0321000002039575\\
0.04	0.0321000009982986\\
0.045	0.0321000016189535\\
0.05	0.0321000049000936\\
0.055	0.050159663072432\\
0.06	0.0690350943528282\\
0.065	0.0856144999767723\\
0.07	0.102147665977322\\
0.075	0.118020461872036\\
0.08	0.136616559241731\\
0.085	0.153522601822533\\
0.09	0.172909861559047\\
0.095	0.19252440931945\\
0.1	0.212211893623951\\
0.105	0.232120476455319\\
0.11	0.252103977922087\\
0.115	0.272100838085114\\
0.12	0.292100266115519\\
0.125	0.317619385381683\\
0.13	0.317620945489366\\
0.135	0.317622505597174\\
0.14	0.317624065705109\\
0.145	0.31762562581317\\
0.15	0.317627185921357\\
0.155	0.31762874602967\\
0.16	0.31763030613811\\
0.165	0.317631866246675\\
0.17	0.317633426355367\\
0.175	0.317634986464184\\
};
\addlegendentry{$t_{s}$=2700}

\addplot [color=mycolor1, dotted, line width=2.0pt]
  table[row sep=crcr]{%
0.035	0.0321000002039575\\
0.04	0.0321000009982986\\
0.045	0.0321000016189536\\
0.05	0.0321000049001211\\
0.055	0.0517259644121438\\
0.06	0.0709012373250775\\
0.065	0.0894543092968436\\
0.07	0.110207746755737\\
0.075	0.130192328334349\\
0.08	0.151700248730894\\
0.085	0.173095752696228\\
0.09	0.197974225729643\\
0.095	0.222634147708724\\
0.1	0.247605846938721\\
0.105	0.272598871858734\\
0.11	0.298098880281857\\
0.115	0.318600655902257\\
0.12	0.318600655902257\\
0.125	0.318600655902257\\
0.13	0.318600655902257\\
0.135	0.318600655902257\\
0.14	0.318600655902257\\
0.145	0.318600655902257\\
0.15	0.318600655902257\\
0.155	0.318600655902257\\
0.16	0.318600655902257\\
0.165	0.318600655902257\\
0.17	0.318600655902257\\
0.175	0.318600655902257\\
};
\addlegendentry{$t_{s}$=3300}

\end{axis}
\end{tikzpicture}%
		\label{fig:311}
	}
	\subfigure[Impact of the Penalty Coefficient $c_{I}$ on Expected Total Cost]{
%
%
\definecolor{mycolor1}{rgb}{1.00000,0.00000,1.00000}%
\begin{tikzpicture}[scale=.55]

\begin{axis}[%
width=2.3in,
height=2in,
at={(0.758in,0.559in)},
scale only axis,
xmin=0.02,
xmax=0.18,
xlabel style={font=\color{white!15!black}},
xlabel={Penalty Coefficient $c_{I}$},
ymin=0.06,
ymax=0.2,
ylabel style={font=\color{white!15!black}},
ylabel={Expected Total Cost},
axis background/.style={fill=white},
legend style={at={(0.63,0.58)}, anchor=south west, legend cell align=left, align=left, draw=white!15!black}
]
\addplot [color=green, line width=2.0pt]
  table[row sep=crcr]{%
0.035	0.166008333340436\\
0.04	0.168508333356323\\
0.045	0.171008333351517\\
0.05	0.173508333330586\\
0.055	0.175883155136418\\
0.06	0.177973719732732\\
0.065	0.1796708998941\\
0.07	0.180931939990213\\
0.075	0.181803697424196\\
0.08	0.18236704847138\\
0.085	0.182709217323389\\
0.09	0.182904437401202\\
0.095	0.183008960100937\\
0.1	0.183061334218256\\
0.105	0.183085791198642\\
0.11	0.183096325198046\\
0.115	0.183100461355602\\
0.12	0.183101914694348\\
0.125	0.183102359492\\
0.13	0.183102473423484\\
0.135	0.183102496400658\\
0.14	0.183102499706738\\
0.145	0.183102499771559\\
0.15	0.183102499771559\\
0.155	0.183102499771559\\
0.16	0.183102499771559\\
0.165	0.183102499771559\\
0.17	0.183102499771559\\
0.175	0.183102499771559\\
};
\addlegendentry{$t_{s}$=2100}

\addplot [color=black, dashdotted, line width=2.0pt]
  table[row sep=crcr]{%
0.035	0.113508333340436\\
0.04	0.116841666689656\\
0.045	0.120175000018185\\
0.05	0.123508333330572\\
0.055	0.126698490637689\\
0.06	0.129501918984841\\
0.065	0.131747462591991\\
0.07	0.133350672700111\\
0.075	0.134366639063702\\
0.08	0.134939228477415\\
0.085	0.135223659337609\\
0.09	0.135348362934722\\
0.095	0.135396154443969\\
0.1	0.135411950365053\\
0.105	0.135416339745549\\
0.11	0.135417319358924\\
0.115	0.135417481671872\\
0.12	0.135417499061189\\
0.125	0.13541749977599\\
0.13	0.135417499776086\\
0.135	0.135417499776182\\
0.14	0.135417499776278\\
0.145	0.135417499776373\\
0.15	0.135417499776469\\
0.155	0.135417499776565\\
0.16	0.13541749977666\\
0.165	0.135417499776756\\
0.17	0.135417499776851\\
0.175	0.135417499776947\\
};
\addlegendentry{$t_{s}$=2700}

\addplot [color=mycolor1, dotted, line width=2.0pt]
  table[row sep=crcr]{%
0.035	0.0610083333404361\\
0.04	0.0651750000229894\\
0.045	0.0693416666848517\\
0.05	0.0735083333305717\\
0.055	0.0775266478424211\\
0.06	0.0810815935471031\\
0.065	0.08389427907868\\
0.07	0.0858157950740408\\
0.075	0.0869215999481186\\
0.08	0.0874476445539194\\
0.085	0.0876511126620964\\
0.09	0.0877141240641775\\
0.095	0.0877294038770653\\
0.1	0.0877321505826372\\
0.105	0.0877324787312422\\
0.11	0.08773249956997\\
0.115	0.0877324997911389\\
0.12	0.0877324997911389\\
0.125	0.0877324997911389\\
0.13	0.0877324997911389\\
0.135	0.0877324997911389\\
0.14	0.0877324997911389\\
0.145	0.0877324997911389\\
0.15	0.0877324997911389\\
0.155	0.0877324997911389\\
0.16	0.0877324997911389\\
0.165	0.0877324997911389\\
0.17	0.0877324997911389\\
0.175	0.0877324997911389\\
};
\addlegendentry{$t_{s}$=3300}

\end{axis}
\end{tikzpicture}%
		\label{fig:312}
	}	
	\caption{\small Impact of Penalty Coefficient $c_{I}$ with $t_{e}=3600s$ and $c_{s}=\bar{\pi}/10$}
	\label{fig4}
\end{figure}

\subsection{PR}

In this subsection, we numerically evaluate the bidding prices, and the expected total cost for PR. We  again set $t_e$ at  3600s. We investigate the variation of the deadlines and the recovery time on the bidding prices in the spot instances, and the expected total cost.  Figure \ref{fig:4} suggests that the bid price on spot instance (Fig. \ref{fig:4.1}) and expected total cost (Fig. \ref{fig:4.3}) increases with the recovery time. Intuitively, as the recovery time increases the more time is required to recover  a job after it is interrupted. Thus, the bidding price is higher. Hence, the expected cost is also higher.  Fig. ~\ref{fig:4.1}  and Fig.~\ref{fig:4.3} show that as the deadline increases, the bidding price and the expected cost decreases in the PR scenario. This is  consistent with Proposition \ref{pro3} which shows that bid price decreases as the deadline increases. 
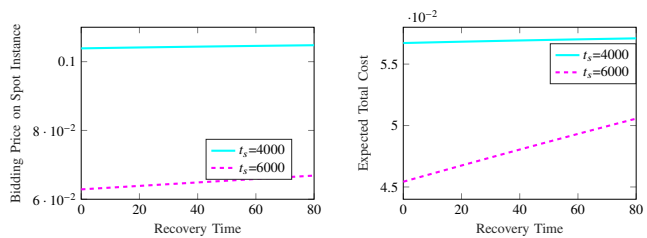
\begin{figure}[htbp]
	\centering
	\subfigure[Impact of the Recovery Time on Bidding Price]{
%
%
\definecolor{mycolor1}{rgb}{0.00000,1.00000,1.00000}%
\definecolor{mycolor2}{rgb}{1.00000,0.00000,1.00000}%
\begin{tikzpicture}[scale =0.53]

\begin{axis}[%
width=2.3in,
height=1.7in,
at={(0.758in,0.58in)},
scale only axis,
xmin=0,
xmax=80,
xlabel style={font=\color{white!15!black}},
xlabel={Recovery Time},
ymin=0.06,
ymax=0.11,
ylabel style={font=\color{white!15!black}},
ylabel={Bidding Price on Spot Instance},
axis background/.style={fill=white},
legend style={at={(0.72,0.35)}, anchor=north, legend cell align=left, align=left, draw=white!15!black}
]
\addplot [color=mycolor1, line width=1.5pt]
  table[row sep=crcr]{%
0	0.103861595082283\\
1	0.103875237870216\\
2	0.103888804864883\\
3	0.103902296066284\\
4	0.103915749371052\\
5	0.103929126882553\\
6	0.103942428600788\\
7	0.10395569242239\\
8	0.103968880450726\\
9	0.103981992685795\\
10	0.103995067024231\\
11	0.104008027672768\\
12	0.104020988321304\\
13	0.104033873176575\\
14	0.104046682238579\\
15	0.104059415507317\\
16	0.104072148776054\\
17	0.10408473045826\\
18	0.104097312140465\\
19	0.104109855926037\\
20	0.104122323918343\\
21	0.104134678220749\\
22	0.104147032523155\\
23	0.104159311032295\\
24	0.104171551644802\\
25	0.104183716464043\\
26	0.10419584338665\\
27	0.104207894515991\\
28	0.104219869852066\\
29	0.104231845188141\\
30	0.104243744730949\\
31	0.104255568480492\\
32	0.104267354333401\\
33	0.104279064393043\\
34	0.104290774452686\\
35	0.104302408719063\\
36	0.104313967192173\\
37	0.10432548776865\\
38	0.104336970448494\\
39	0.104348377335072\\
40	0.104359746325016\\
41	0.104371039521694\\
42	0.104382294821739\\
43	0.104393512225151\\
44	0.10440469173193\\
45	0.104415795445442\\
46	0.104426861262321\\
47	0.104437889182568\\
48	0.104448841309547\\
49	0.104459793436527\\
50	0.104470669770241\\
51	0.104481470310688\\
52	0.104492270851135\\
53	0.104502995598316\\
54	0.104513682448864\\
55	0.104524331402779\\
56	0.104534904563427\\
57	0.104545439827442\\
58	0.104555937194824\\
59	0.104566396665573\\
60	0.104576818239689\\
61	0.104587164020538\\
62	0.104597509801388\\
63	0.104607779788971\\
64	0.104618011879921\\
65	0.104628206074238\\
66	0.104638324475288\\
67	0.104648442876339\\
68	0.104658485484123\\
69	0.104668528091907\\
70	0.104678494906425\\
71	0.10468842382431\\
72	0.104698352742195\\
73	0.104708167970181\\
74	0.104717983198166\\
75	0.104727760529518\\
76	0.104737499964237\\
77	0.10474716360569\\
78	0.104756827247143\\
79	0.104766415095329\\
80	0.104776002943516\\
};
\addlegendentry{$t_{s}$=4000}

\addplot [color=mycolor2, dashed, line width=1.5pt]
  table[row sep=crcr]{%
0	0.0628723787635565\\
1	0.062922894975543\\
2	0.0629734206616879\\
3	0.0630239368736744\\
4	0.063074453085661\\
5	0.0631249882459641\\
6	0.0631755044579506\\
7	0.0632260301440954\\
8	0.0632765558302402\\
9	0.0633270815163851\\
10	0.0633776072025299\\
11	0.0634281234145164\\
12	0.063478639626503\\
13	0.0635291558384895\\
14	0.0635796720504761\\
15	0.0636301787883043\\
16	0.0636806855261326\\
17	0.0637311827898026\\
18	0.0637816705793143\\
19	0.0638321583688259\\
20	0.0638826366841793\\
21	0.0639331055253744\\
22	0.0639835743665695\\
23	0.0640340242594481\\
24	0.0640844741523266\\
25	0.0641349050968885\\
26	0.0641853360414505\\
27	0.0642357480376959\\
28	0.064286150559783\\
29	0.0643365436077118\\
30	0.0643869271814823\\
31	0.0644372918069363\\
32	0.0644876469582319\\
33	0.064537983161211\\
34	0.0645883098900318\\
35	0.0646386271446943\\
36	0.0646889254510403\\
37	0.0647392048090696\\
38	0.0647894652187824\\
39	0.0648397161543369\\
40	0.0648899481415749\\
41	0.0649401611804962\\
42	0.064990355271101\\
43	0.0650405304133892\\
44	0.0650906866073609\\
45	0.0651408238530159\\
46	0.0651909421503544\\
47	0.0652410414993763\\
48	0.0652911219000816\\
49	0.0653411644041538\\
50	0.065391206908226\\
51	0.065441211515665\\
52	0.0654911971747875\\
53	0.0655411638855934\\
54	0.0655911116480827\\
55	0.0656410215139389\\
56	0.0656909124314785\\
57	0.0657407844007015\\
58	0.0657906184732914\\
59	0.0658404335975647\\
60	0.0658902202993631\\
61	0.065939988052845\\
62	0.0659897084355354\\
63	0.0660394288182259\\
64	0.0660890923559666\\
65	0.0661387558937073\\
66	0.0661883720606566\\
67	0.0662379692792893\\
68	0.0662875191271305\\
69	0.0663370500266552\\
70	0.0663865619778633\\
71	0.06643602655828\\
72	0.0664854721903801\\
73	0.0665348704516888\\
74	0.0665842497646809\\
75	0.0666335911810398\\
76	0.0666828947007656\\
77	0.0667321792721748\\
78	0.0667814259469509\\
79	0.0668306252509356\\
80	0.0668798056066036\\
};
\addlegendentry{$t_{s}$=6000}

\end{axis}
\end{tikzpicture}%
		\label{fig:4.1}
	}
	\subfigure[Impact of the Recovery Time on Expected Total Cost]{
%
%
\definecolor{mycolor1}{rgb}{0.00000,1.00000,1.00000}%
\definecolor{mycolor2}{rgb}{1.00000,0.00000,1.00000}%
\begin{tikzpicture}[scale = 0.53]

\begin{axis}[%
width=2.3in,
height=1.7in,
at={(0.7in,0.57in)},
scale only axis,
xmin=0,
xmax=80,
xlabel style={font=\color{white!15!black}},
xlabel={Recovery Time},
ymin=0.044,
ymax=0.058,
ylabel style={font=\color{white!15!black}},
ylabel={Expected Total Cost},
axis background/.style={fill=white},
legend style={at={(0.63,0.67)}, anchor=south west, legend cell align=left, align=left, draw=white!15!black}
]
\addplot [color=mycolor1, line width=1.5pt]
  table[row sep=crcr]{%
0	0.0567138462663825\\
1	0.0567196017095761\\
2	0.0567253255733878\\
3	0.056731018014397\\
4	0.05673668633743\\
5	0.0567423234953833\\
6	0.0567479296447945\\
7	0.0567535120074111\\
8	0.0567590636191697\\
9	0.0567645846365708\\
10	0.0567700821983467\\
11	0.0567755424688125\\
12	0.0567809864682511\\
13	0.0567864003886261\\
14	0.0567917843863931\\
15	0.0567971386179897\\
16	0.0568024768727617\\
17	0.0568077719860594\\
18	0.05681305132492\\
19	0.0568183081464828\\
20	0.0568235358183258\\
21	0.0568287278184156\\
22	0.0568339043385083\\
23	0.05683905212293\\
24	0.0568441779237875\\
25	0.0568492752466275\\
26	0.0568543507884117\\
27	0.0568593981098497\\
28	0.0568644173673453\\
29	0.0568694216330494\\
30	0.0568743980373956\\
31	0.0568793467368036\\
32	0.0568842742629108\\
33	0.0568891743418242\\
34	0.0568940597702444\\
35	0.0568989179541731\\
36	0.0569037490500592\\
37	0.0569085594518819\\
38	0.056913349233505\\
39	0.0569181122863389\\
40	0.0569228549218097\\
41	0.056927571086405\\
42	0.0569322670365281\\
43	0.0569369428461294\\
44	0.0569415985891786\\
45	0.0569462283223017\\
46	0.0569508381918956\\
47	0.0569554282719844\\
48	0.0569599927017822\\
49	0.056964543452525\\
50	0.0569690687561397\\
51	0.0569735687692486\\
52	0.0569780552981036\\
53	0.0569825167397258\\
54	0.05698695902059\\
55	0.05699138221488\\
56	0.0569957806819986\\
57	0.0570001602659739\\
58	0.0570045210410533\\
59	0.0570088630815061\\
60	0.0570131864616222\\
61	0.0570174856784042\\
62	0.0570217719883061\\
63	0.0570260343385625\\
64	0.0570302783808633\\
65	0.0570345041896111\\
66	0.0570387063994122\\
67	0.0570428959918536\\
68	0.0570470621892675\\
69	0.0570512158632919\\
70	0.0570553463463044\\
71	0.0570594590976703\\
72	0.0570635594667733\\
73	0.0570676317037617\\
74	0.0570716917076969\\
75	0.0570757342784053\\
76	0.0570797594905533\\
77	0.0570837622815456\\
78	0.0570877530281808\\
79	0.0570917215581392\\
80	0.0570956781382331\\
};
\addlegendentry{$t_{s}$=4000}

\addplot [color=mycolor2, dashed, line width=1.5pt]
  table[row sep=crcr]{%
0	0.0454265717387325\\
1	0.0454929197646928\\
2	0.0455592340969428\\
3	0.0456255073045731\\
4	0.0456917422518506\\
5	0.0457579451066656\\
6	0.045824101778825\\
7	0.0458902218060172\\
8	0.0459563012428722\\
9	0.0460223395143844\\
10	0.0460883360452447\\
11	0.0461542869909439\\
12	0.0462201950788469\\
13	0.0462860597329944\\
14	0.04635188037718\\
15	0.0464176532350467\\
16	0.0464833809646411\\
17	0.0465490598241906\\
18	0.0466146892893253\\
19	0.0466802719658669\\
20	0.0467458041645278\\
21	0.0468112853612944\\
22	0.0468767181093864\\
23	0.0469420957133653\\
24	0.0470074237860378\\
25	0.0470726957035347\\
26	0.0471379170078344\\
27	0.0472030811471233\\
28	0.0472681906226053\\
29	0.0473332449122336\\
30	0.0473982434942658\\
31	0.0474631829321656\\
32	0.0475280656564611\\
33	0.0475928882680117\\
34	0.0476576531618431\\
35	0.0477223598182375\\
36	0.0477870048945503\\
37	0.0478515879273828\\
38	0.0479161084540497\\
39	0.0479805687801956\\
40	0.0480449656396469\\
41	0.0481092985717678\\
42	0.0481735671166764\\
43	0.0482377708152581\\
44	0.0483019092091742\\
45	0.0483659818408733\\
46	0.0484299882536008\\
47	0.0484939279914111\\
48	0.0485578005991761\\
49	0.0486216004652039\\
50	0.0486853374891403\\
51	0.0487490009420983\\
52	0.0488125955300064\\
53	0.0488761208022681\\
54	0.0489395763091647\\
55	0.0490029566758106\\
56	0.0490662664581011\\
57	0.0491295052092931\\
58	0.0491926676744342\\
59	0.0492557582960728\\
60	0.0493187742651647\\
61	0.0493817175436658\\
62	0.0494445807120161\\
63	0.0495073750384183\\
64	0.0495700862463092\\
65	0.0496327277346083\\
66	0.0496952877135786\\
67	0.0497577726064844\\
68	0.0498201753581639\\
69	0.0498825022750656\\
70	0.0499447529246833\\
71	0.0500069204359217\\
72	0.0500690109388864\\
73	0.0501310176859975\\
74	0.0501929466893172\\
75	0.0502547933909914\\
76	0.0503165574875164\\
77	0.0503782427269669\\
78	0.0504398446773944\\
79	0.0505013610556436\\
80	0.0505627975167733\\
};
\addlegendentry{$t_{s}$=6000}

\end{axis}
\end{tikzpicture}%
		\label{fig:4.3}
	}
	\caption{\small Impact of Recovery Time with $t_{e}=3600s$. }
	\label{fig:4}
\end{figure}

\section{Data-Driven Evaluation}\label{real}
To verify our models and comparing with baseline algorithms, we collect three sets of data for three instance types: r3.large, r4.16xlarge and d2.2xlarge in the US Eastern region, whose on-demand prices are \$0.1660, \$4.2560 and \$1.38 respectively\cite{ec2ins}.  We collect the Amazon EC2 spot price history for the three months from (July 9 - October 9, 2017). 
The empirical PDF and associated estimated PDF of these prices are shown by the black dots and blue line respectively in Figure \ref{fig:ama}. We observe that they approximately follow exponential functions, which are consistent among different instance types, though the spot prices are different.

We consider a job that needs one hour (i.e., $t_{e}=3600s$) to be executed without interruption. The one hour time periods are Oct.10, 3:00pm-4:00pm, Oct.12, 9:40am-10:40am, and Oct.14, 1:30pm-2:30pm for r3.large, r4.16xlarge, and d2.2xlarge, respectively. We first examine the optimal bid prices using three different bidding strategies (OTR-EG and with OTR-P ($t_{s}=\bar{\pi}/10$ and $t_{I} = \bar{\pi}/3$), and PR with recovery time 10s and 50s ) that are derived in Section \ref{User Bidding Strategies} on Amazon EC2 spot instances. {\em The strategies are summarized in Table \ref{rsp1} for $t_{s}=2000s$}. We consider a model where the user is price taker, one single user's action does not impact the distribution of the spot price. 

\begin{color} {black} We show that our approach outperforms two baseline algorithms, where we use one single type of instance to finish the job. Specifically, for smaller value of deadlines, we split the job into two sub-jobs of equal size, and each corresponding to one instance request. For Baseline I, each of them will be run on one on-demand instance. For Baseline II, we consider they only use the spot market. We adopt the strategy that has been proposed in \cite{lz15}. Since these two sub-jobs are requesting for the same types of spot instance, the bidding prices are same for both of them. For larger value of the deadlines, we run the whole job solely on one on-demand instance and one spot instance for Baseline I and Baseline II respectively. \end{color}



\subsubsection{Smaller Value of the Deadlines}
For shorter deadline with $t_{s}= 2000s$, we determine the optimal bid prices and optimal portion of job running on on-demand instance for different bidding strategies (Table \ref{rsp1}) to the associated spot instance (r3.large, r4.16xlarge and d2.2xlarge). We set the deadline at $t_s=2000s$ and compare our algorithm with two baseline algorithms. We use our proposed methods to calculate $p$ and $q$. Specifically, from Table~\ref{rsp1}, we observe that the price of PR is the highest, the price for OTR-P is medium, and the price of OTR-EG is lowest, which are consistent among different instance types and consistent with our simulation results in Section \ref{simulation}. 

Fig.~\ref{fig:6} compares the job completion time, the completed job and total cost for different instance types with different bidding strategies  and the baseline algorithms when the deadline $t_{s}$ is $2000s$. 

Fig.~\ref{fig:6.4} shows that the total cost running with OTR-EG is almost equal to that of the bidding strategy OTR-P and with PR. Thus, the penalty does not increase the cost, yet, increases the portion of the completed jobs. Our numerical results show that {\em the penalty mechanism reduces the cost by almost $50\%$ for all instances compared to the baseline I algorithm.} {\em And compared to other methods, the method where we put penalty for the incomplete job and violating the deadline can achieve the minimum cost but is able to get the job done before the deadline (see Fig.~\ref{fig:6.3} and Fig.~\ref{fig:6.2}).}

Fig.~\ref{fig:6.3} shows that r3.large job is interrupted when using OTR-EG and baseline II. In contrast, {\em the r3.large job with  penalty and PR bidding strategies are not interrupted.} However, for r4.16xlarge and d2.2xlarge instances, none of experiments are interrupted. Thus, {\em the penalty does not affect the rate of completion of jobs}. Since in the baseline I algorithm, all the jobs are put in the on-demand instance, thus, the job is never interrupted. {\em Note that our penalty based approach is able to achieve similar completion rate of the baseline algorithm I, however, at a smaller cost.which can be verified in Fig.~\ref{fig:6.4}.}

From Fig.~\ref{fig:6.3}, we observe that none of the job is interrupted by using different methods for job r4.16xlarge. We compare the job completion time of r4.16xlarge in Fig.~\ref{fig:6.2}. The results show {\em the job completion time by using our algorithms is longer than that of baseline I but not beyond the deadline (the red line), while the job completion time is beyond the deadline by using baseline II.} Recall for baseline I, we split the job to two sub-jobs with the same sizes and each of them will be run on on-demand instance without any interruption, which means the job completion time is 0.5 hour. There are two reasons that the job completion times are longer by using our proposed algorithms for job r4.16xlarge compared to baseline I: i) more than 50\% of the job is allocated to spot instance, and ii) there may be some time to enter the system. 


\begin{figure*}[htbp]
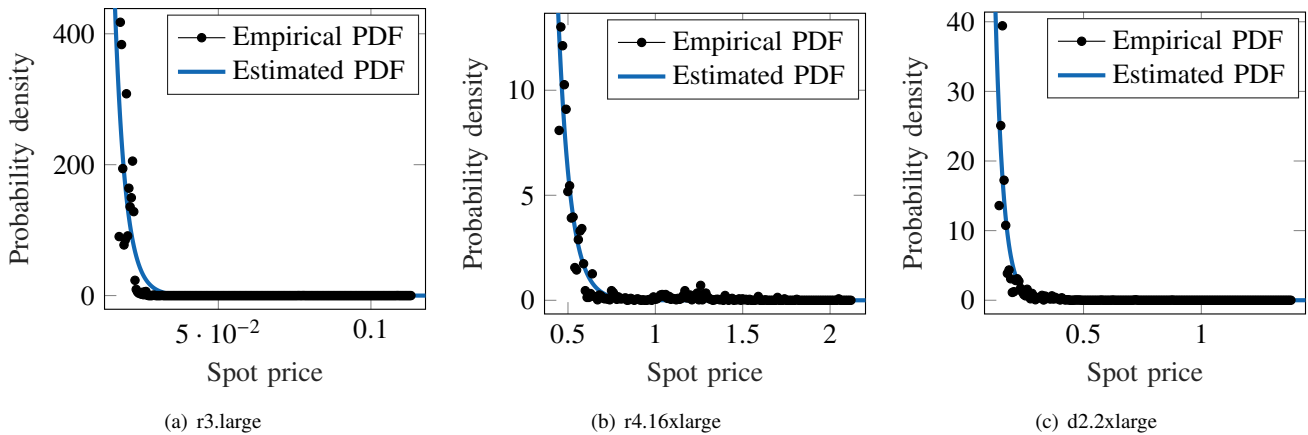

	\centering
	\subfigure[r3.large]{
		\input{r3.large.tex}
		\label{fig:5.1}
	}
	\subfigure[r4.16xlarge ]{
		\input{r4.16xlarge.tex}
		\label{fig:5.2}
	}
	\subfigure[d2.2xlarge ]{
		\input{d2.2xlarge.tex}
		\label{fig:5.3}
	}
	\caption{ Fitting the probability density function of Amazon spot price in the US Eastern region, the best fits are exponential functions with equation $a\exp(bx)$. The fitted parameter values, which are with 95\% confidence bounds, are (a, b) = (44350, -285.7), (8126, -14.39), and (1571, -28.84) for Fig. \ref{fig:5.1}, Fig. \ref{fig:5.2}, and Fig. \ref{fig:5.3}, respectively.}
	\label{fig:ama}
\end{figure*}

\begin{table*}[htbp]
\centering
\caption{Optimal bid prices for a single instance with $t_{s}=2000s$}
\label{rsp1}
\begin{tabular}{|c|c|c|c|c|c|c|c|c|c|c|c|c|}
\hline
\multirow{2}{*}{\begin{tabular}[c]{@{}c@{}}Instance \\ Types\end{tabular}} & \multirow{2}{*}{$\bar{\pi}$} & \multirow{2}{*}{$\underline{\pi}$} & \multicolumn{2}{c|}{\begin{tabular}[c]{@{}c@{}}OTR-EG\end{tabular}} & \multicolumn{2}{c|}{\begin{tabular}[c]{@{}c@{}}OTR-P\end{tabular}} & \multicolumn{2}{c|}{\begin{tabular}[c]{@{}c@{}}PR with $t_{r} = 10s$\end{tabular}} & \multicolumn{2}{c|}{Baseline I} & \multicolumn{2}{c|}{Baseline II} \\ \cline{4-13} 
                                                                           &                              &                                    & p                                               & q                                              & p                                              & q                                            & p                                              & q                                              & p                 & q           & p                  & q           \\ \hline
r3.large                                                                   & \$0.1660                     & \$0.0173                           & \$0.04258                                       & 0.444505                                       & \$0.08813                                      & 0.444444                                     & \$0.166                                        & 0.444444                                       & \$0.1660          & 0.5         & \$0.02357          & 0.5         \\ \hline
r4.16xlarge                                                                & \$4.2560                     & \$0.4343                           & \$1.0666                                        & 0.4445                                         & \$1.8375                                       & 0.444444                                     & \$4.2560                                       & 0.444444                                       & \$4.2560          & 0.5         & \$0.5588           & 0.5         \\ \hline
d2.2xlarge                                                                 & \$1.38                       & \$0.138                            & \$0.3538                                        & 0.44461                                        & \$0.83833                                      & 0.444444                                     & \$1.38                                         & 0.444444                                       & \$1.38            & 0.5         & \$0.2              & 0.5         \\ \hline
\end{tabular}
\end{table*}

\subsubsection{Larger Value of the Deadlines}

We now discuss our results when $t_{s} = 4000s$. When $t_{s} > 3600s$, from Section \ref{User Bidding Strategies} (Proposition \ref{pro2} and Proposition \ref{pro4}), we know that the whole one-hour job will be run on spot instance. The optimal bid prices are shown in Table \ref{rsp2}. 


Unlike the bidding prices with smaller value of the deadlines in Table \ref{rsp1}, Table \ref{rsp2} shows that the bidding prices with PR are the lowest, while the bidding prices for OTR-P are the highest in different instance types. As we also expect from Section \ref{User Bidding Strategies}'s analysis for PR, longer recovery times (50s) gives higher bidding prices than that of shorter recovery times (10s), which are also consistent with the findings in \cite{lz15}. 

From Figure \ref{fig:7.4} we can see that compared to the other two bidding strategies, OTR-P does not introduce more cost but can finish all the job in time without any interruption or penalty (see Figure \ref{fig:7.3} and Figure \ref{fig:7.2}, verifying the reliability of that bidding strategy. 

Figure \ref{fig:7.3} shows that with OTR-EG bidding strategy, it cannot finish more job compared to PR (e.g., r3.large), that is because the job may be interrupted after running some time, while the same job will be recovered from interruption with PR bidding strategy even with lower bidding prices (see Table \ref{rsp2}). On the other hand, there may  be a delay in finishing job in the PR. The reason behind that is the job may not be interrupted by using OTR-EG because of the higher bidding price, while PR job may be interrupted, which induces that the finished job can be more by using OTR-EG compared to PR strategy within a specific time period. Note that our proposed strategy outperforms the baseline II, as the baseline II strategy gives rise incomplete jobs for r3.large. Our proposed strategy for PR as well as for OTR-EG always results in completed jobs. 

Note that the job completion time is the summation of the time to enter the system and running time. In Figure~\ref{fig:7.2}, we can observe that for the jobs running on r3.large and d2.2xlarge, the completion time OTR-EG strategies is not longer than that of OTR-P strategy, that is because the bid price of OTR-EG is lower than that of OTR-P (see Table \ref{rsp2}), which means the job OTR-EG has higher probability to get interrupted, thus the running time is shorter. However, since the time to enter the system is longer with a lower bidding price, the job running on r4.16xlarge with OTR-EG bidding strategy has  less job completion time compared to that of OTR-P strategy. 

Another interesting result in Figure~\ref{fig:7.2}, unlike our expectation (higher recovery time may induce higher completion time and higher total cost),  is that within specific deadline, in insance d2.2xlarge, the PR with longer recovery time completes more job compared to that with shorter recovery time, that is because the longer recovery time yields higher bidding price (see Table \ref{rsp2}),  contributing to less interruption and less time to complete the job.


\begin{table*}[ht]
\centering
\caption{Optimal bid prices for a single instance with $t_{s}=4000s$}
\label{rsp2}
\begin{tabular}{|c|c|c|c|l|c|l|c|l|c|l|c|l|}
\hline
\begin{tabular}[c]{@{}c@{}}Instance \\ Types\end{tabular} & $\bar{\pi}$ & $\underline{\pi}$ & \multicolumn{2}{c|}{\begin{tabular}[c]{@{}c@{}}OTR-EG\end{tabular}} & \multicolumn{2}{c|}{\begin{tabular}[c]{@{}c@{}}OTR-P\end{tabular}} & \multicolumn{2}{c|}{\begin{tabular}[c]{@{}c@{}}PR with $t_{r} = 10s$\end{tabular}} & \multicolumn{2}{c|}{\begin{tabular}[c]{@{}c@{}}PR with $t_{r}=50s$\end{tabular}} & \multicolumn{2}{c|}{Baseline II} \\ \hline
r3.large                                                  & \$0.1660    & \$0.0173          & \multicolumn{2}{c|}{\$0.026}                                                                     & \multicolumn{2}{c|}{\$0.090299}                                                               & \multicolumn{2}{c|}{\$0.025463}                                                                 & \multicolumn{2}{c|}{\$0.025855}                                                               & \multicolumn{2}{c|}{\$0.026}     \\ \hline
r4.16xlarge                                               & \$4.2560    & \$0.4343          & \multicolumn{2}{c|}{\$0.606983}                                                                  & \multicolumn{2}{c|}{\$1.8808}                                                                 & \multicolumn{2}{c|}{\$0.596374}                                                                 & \multicolumn{2}{c|}{\$ 0.604158}                                                              & \multicolumn{2}{c|}{\$0.606983}  \\ \hline
d2.2xlarge                                                & \$1.38      & \$0.138           & \multicolumn{2}{c|}{\$0.3538}                                                                    & \multicolumn{2}{c|}{\$0.83833}                                                                & \multicolumn{2}{c|}{\$0.218868
}                                                                     & \multicolumn{2}{c|}{\$0.222752}                                                                   & \multicolumn{2}{c|}{\$0.3538}    \\ \hline
\end{tabular}
\end{table*}


\begin{figure*}
\centering     
\subfigure[Total cost for different instance types]{\label{fig:6.4}\includegraphics[width=57mm]{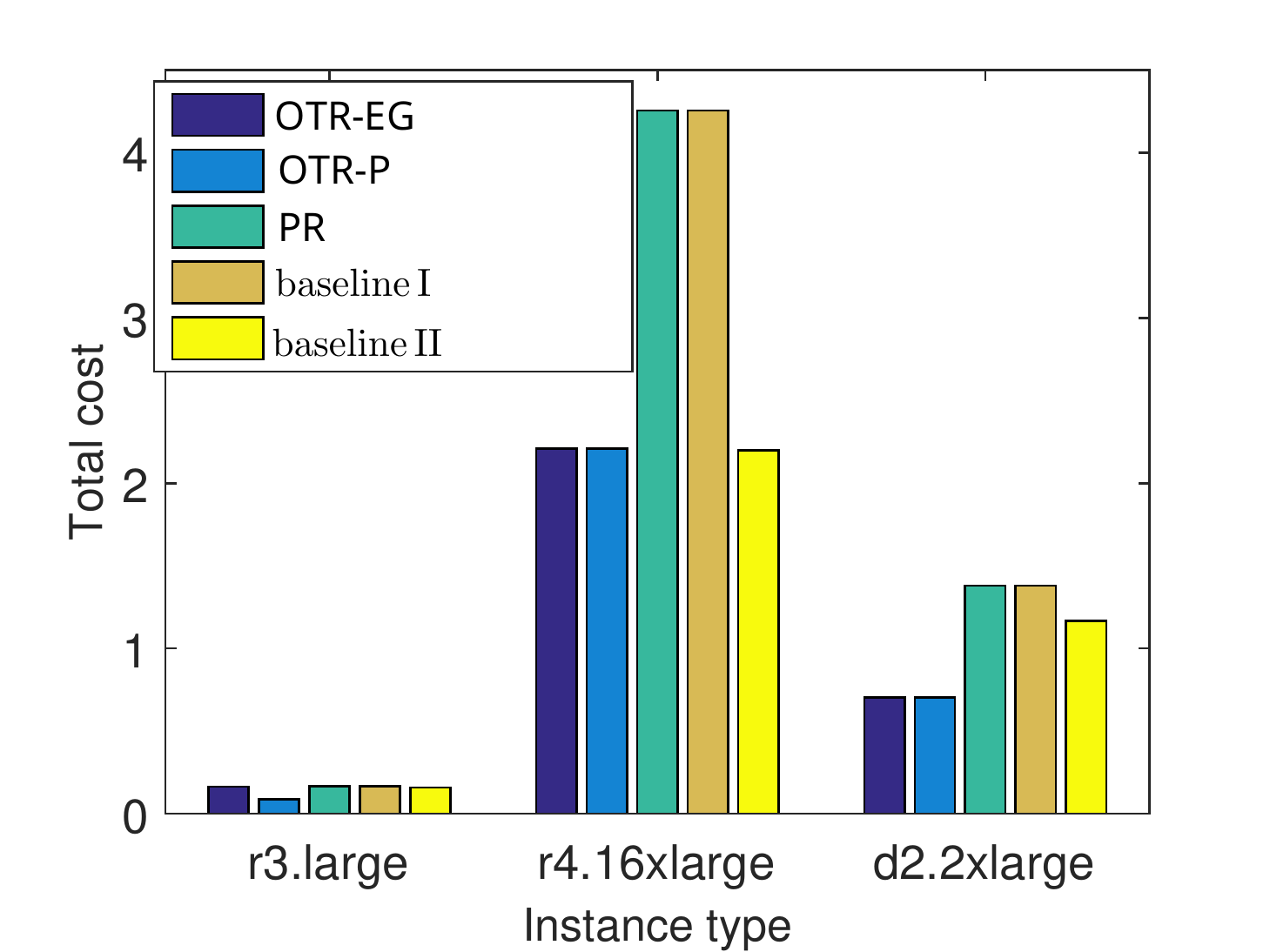}}
\subfigure[Completed job for different instance types]{\label{fig:6.3}\includegraphics[width=57mm]{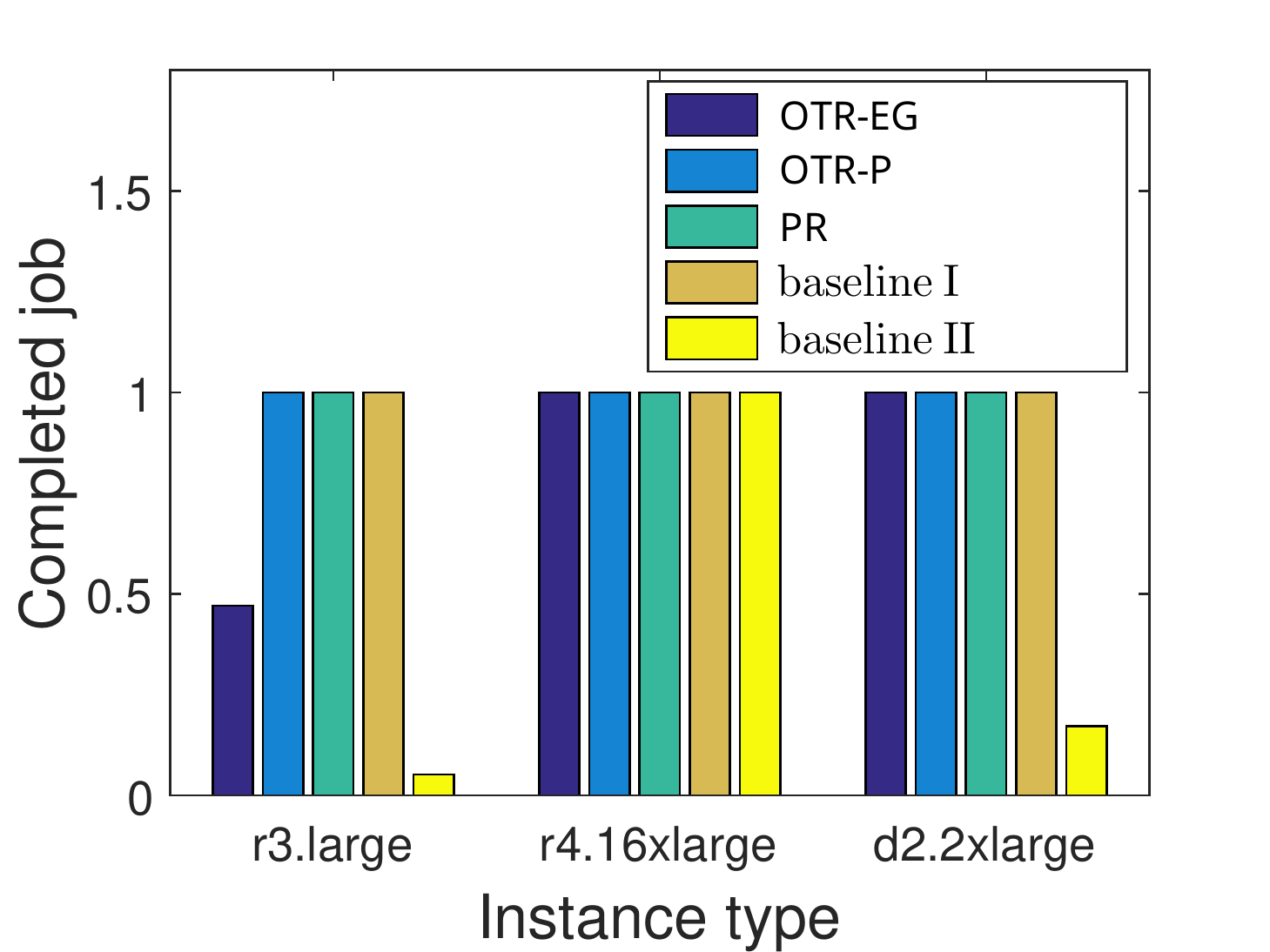}}
\subfigure[Job completion time for different instance types]{\label{fig:6.2}\includegraphics[width=57mm]{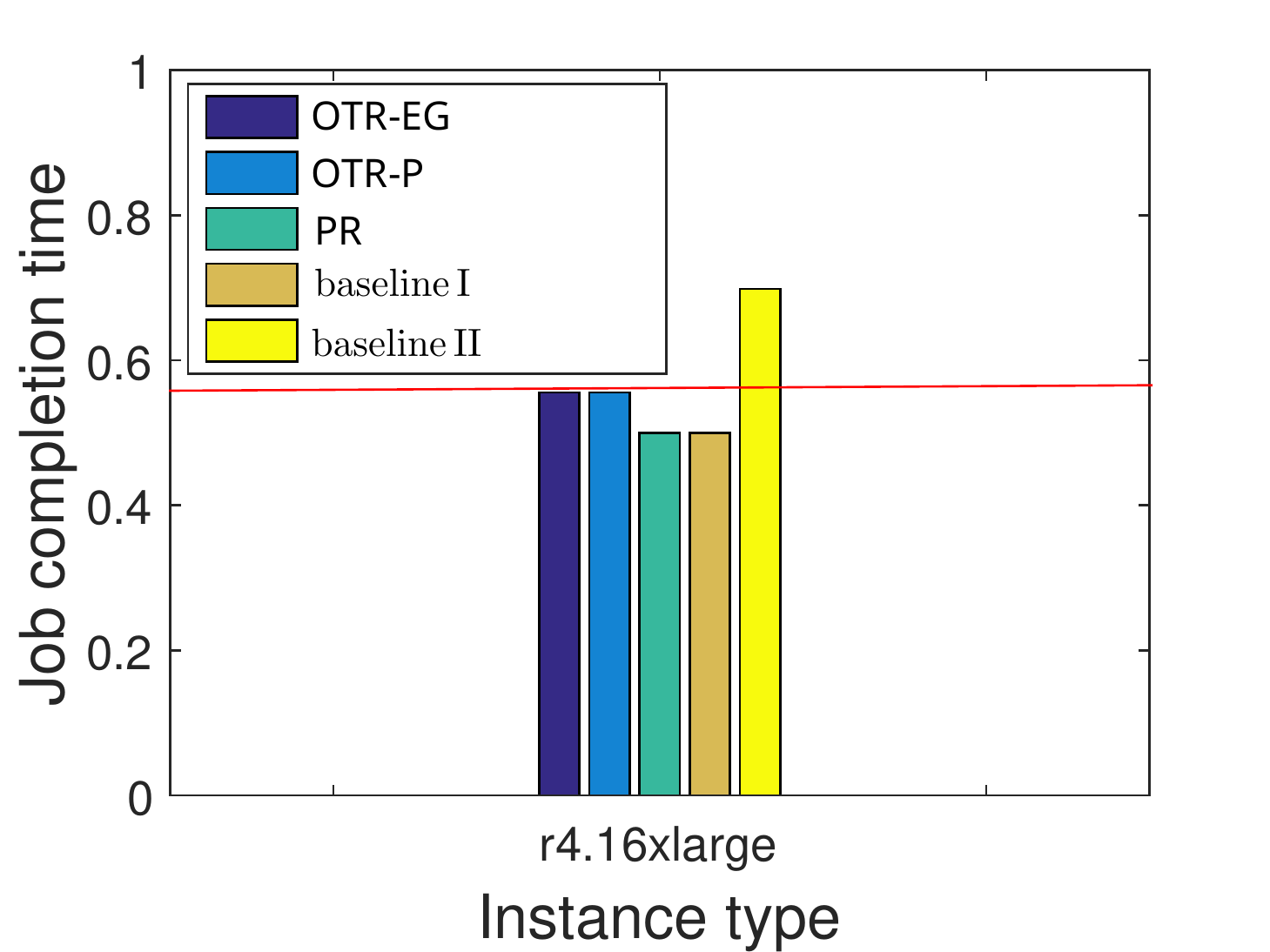}}
\caption{Comparison among different bidding strategies for smaller value of deadlines}
\label{fig:6}
\end{figure*}

\begin{figure*}
\centering     
\subfigure[Total cost for different instance types]{\label{fig:7.4}\includegraphics[width=57mm]{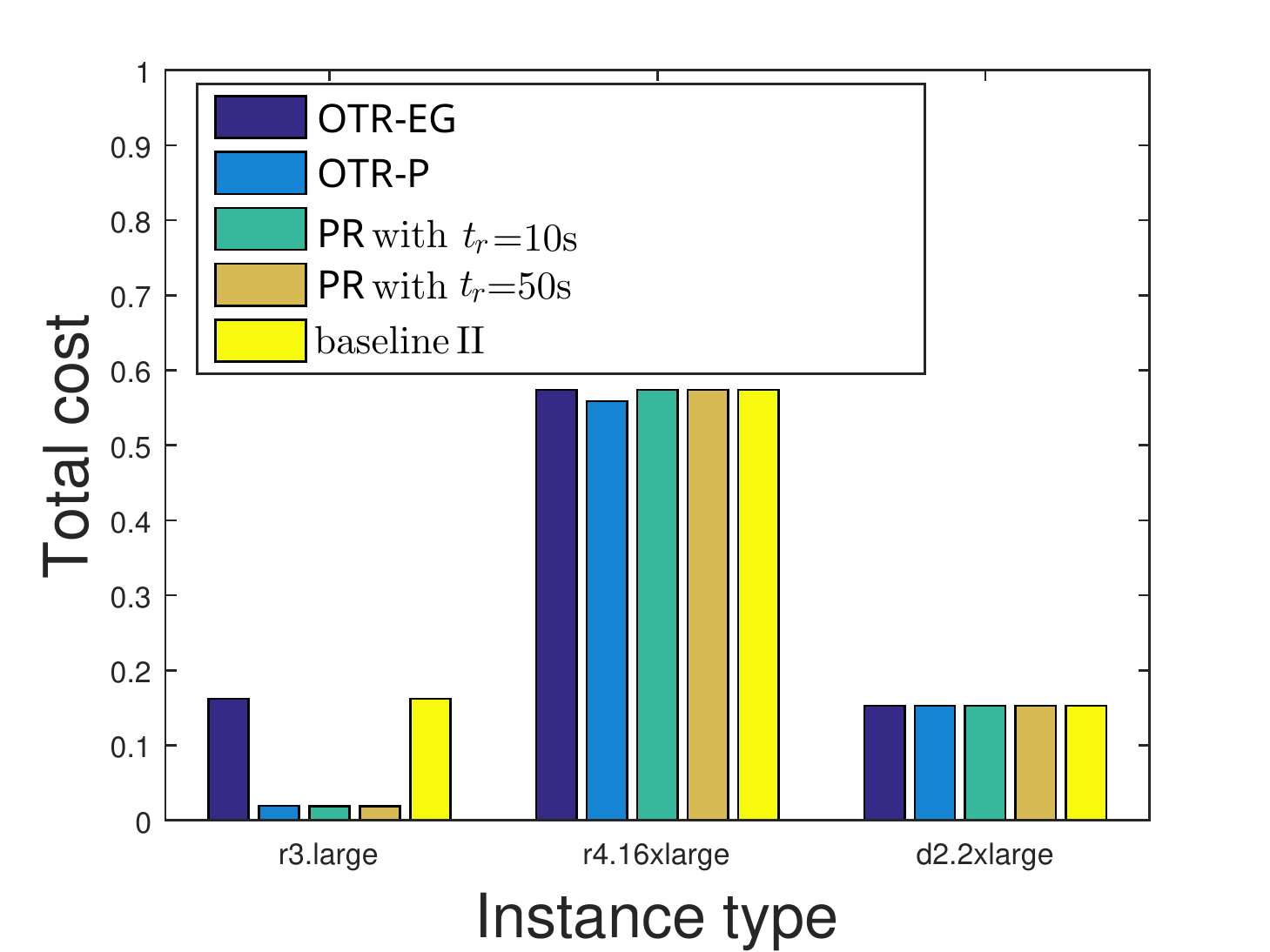}}
\subfigure[Completed job for different instance types]{\label{fig:7.3}\includegraphics[width=57mm]{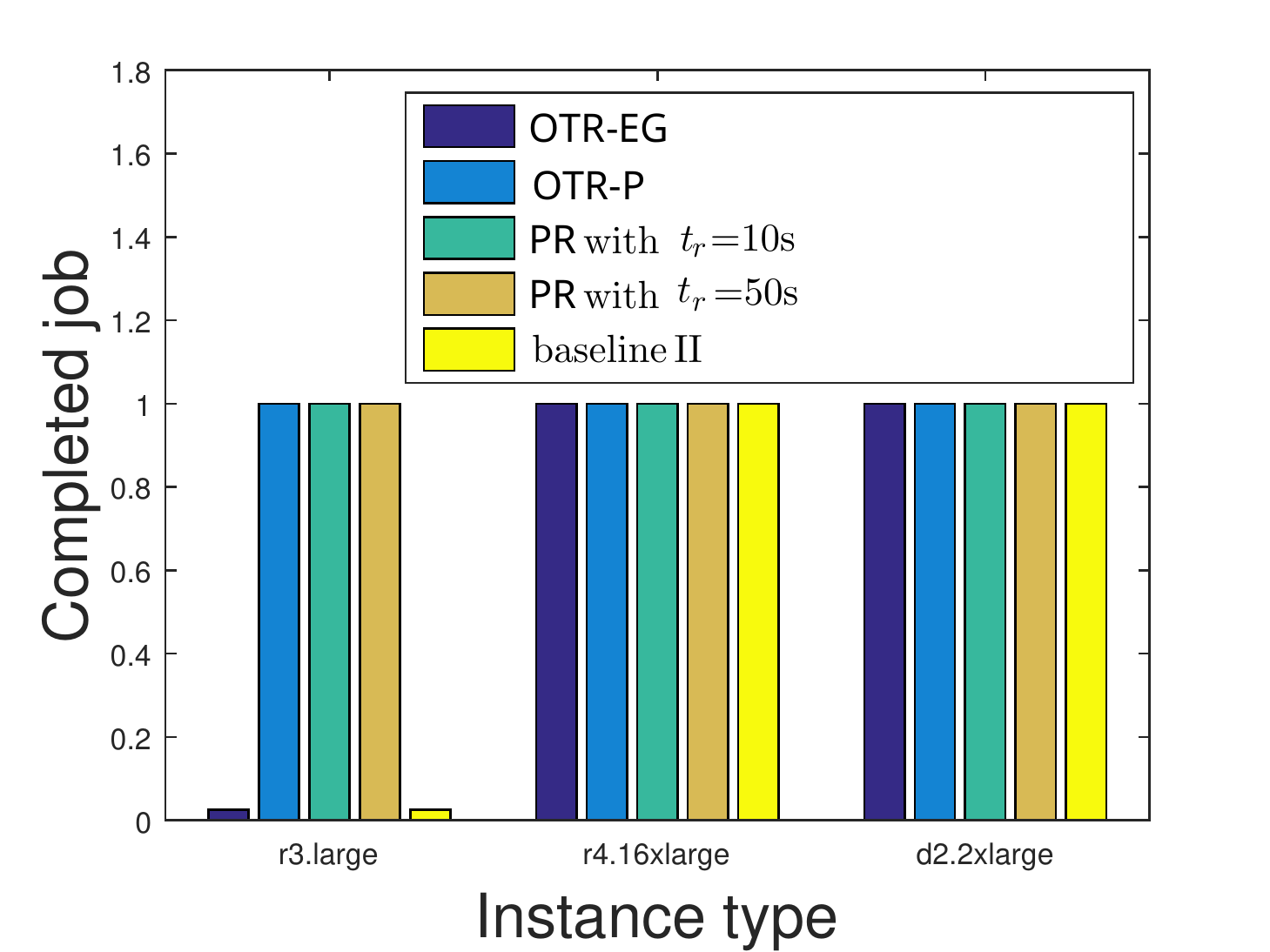}}
\subfigure[Job completion time for different instance types]{\label{fig:7.2}\includegraphics[width=57mm]{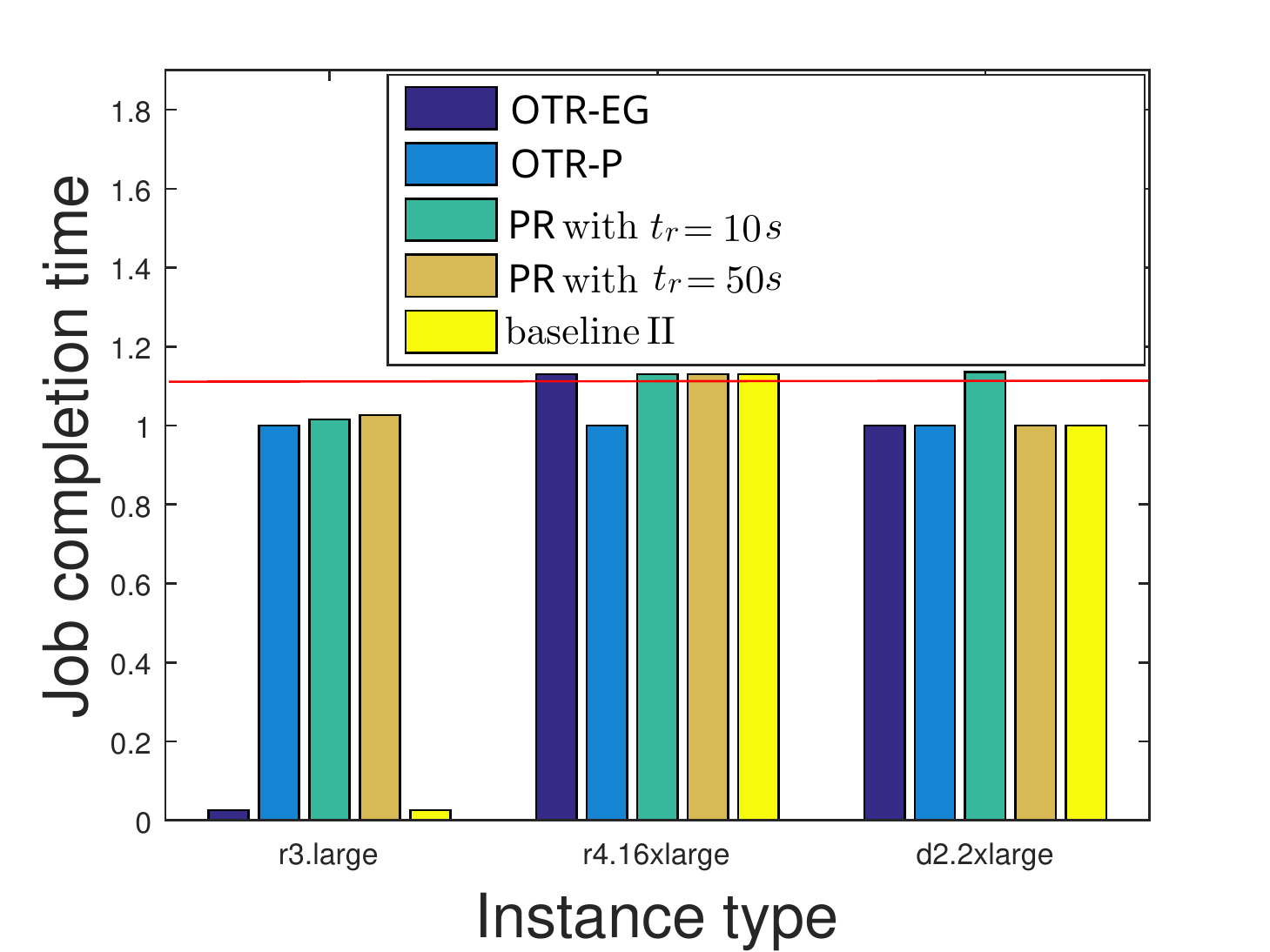}}
\caption{Comparison among different bidding strategies for smaller value of deadlines}
\label{fig:7}
\end{figure*}

\section{Conclusions and Future Work}\label{con}

In this paper, we develop optimization models to minimize the cost by randomizing between the on-demand, and spot instance. We also provide a mechanism to bid optimally in the spot instance. We consider three  different strategies: one-time requests with expected guarantee, one-time requests with penalty, and persistent requests. We characterize the optimal portion of the job that should be run in the on-demand instance. Our analytical result shows that a user should never opt for on-demand instance if the execution time is smaller than the deadline. However, when the the execution time is larger than the deadline, the user should select the on-demand instance with a non-zero portion of the job. Additionally, the bid price also increases as the execution time increases. However, the portion to be run on the on-demand instance never exceeds half for one-time request without penalty and the persistent request scenario.  Our numerical experiments shows the trade-off between higher prices to avoid interruptions (for one-time request without penalty), higher prices to avoid penalty (for one-time request with penalty) and lower prices to save money (for persistent request) on the condition that the deadline requirement is satisfied. It also shows that the user's cost is the lowest in the persistent request scenario.  Finally, we use real world data to test our model and analytical results to verify our models and shows that the user can significantly reduces its cost both in the one-time and the persistent request scenario. 

Some cloud service providers may give a two-minute warning to the user before the instance is revoked \cite{wang17}. The user can use re-bidding strategy to avoid termination of the job. We do not consider such scenario in this paper, however, we believe that such analysis constitutes an interesting future direction. In order to prevent the job from revocation, some user may choose to run each spot request on multiple machines. However, it is not clear how many machines the user should choose and how much to bid. Thus, it can be another interesting research direction.


%



%
%

\ifCLASSOPTIONcaptionsoff
  \newpage
\fi



%

\bibliographystyle{IEEEtran}
\bibliography{acmsmall-sample-bibfile}




\appendices
\section{Proof of Lemma \ref{lemma3}}\label{prolemma3}

Suppose $g(p) =  \mathbb{E}(\pi | \pi \leq p) = \frac{ \int_{\underline{\pi}}^{p}x f_{\pi}(x)dx}{F_{\pi}(p)}$, we take the first-order derivative of $g(p)$ over $p$ and get $\frac{\partial g(p)}{\partial p} = \frac{f_{\pi}(p)}{F_{\pi}(p)^{2}}( pF_{\pi}(p)- \int_{\underline{\pi}}^{p}x f_{\pi}(x)dx)$. Suppose $h(p) =pF_{\pi}(p)- \int_{\underline{\pi}}^{p}x f_{\pi}(x)dx $, we take the first derivative of $h(p)$ over $p$ and get $\frac{\partial h(p)}{\partial p} = F_{\pi}(p) \geq 0$, which means $h(p)$ monotonically increases with $p$ and the minimum value of $h(p)$ is $h_{min}(p)=h(\underline{\pi}) = 0$. Because $\frac{f_{\pi}(p)}{F_{\pi}(p)^{2}}\geq 0$, $\frac{\partial g(p)}{\partial p} \geq 0$. Therefore, $g(p)$ monotonically increase with $p$. The maximum value of $g(p)$ is $g_{\max}(p) = g(\bar{\pi}) =  \frac{\int_{\underline{\pi}}^{\bar{\pi}}x f_{\pi}(x)dx} {F_{\pi}(\bar{\pi})}\leq \frac{\underline{\pi}+\bar{\pi}}{2}$. When $\underline{\pi}$ is closed to 0, $g_{max}(p) \leq \frac{\bar{\pi}}{2}$.

\section{Proof of Claim \ref{cla1}}\label{claim1}

We will prove this result by contradiction. Suppose $\pi_{1}$ is the expected optimal bid price and $q_{1} > \frac{1}{2}$ is the optimal portion of job running on on-demand instance, and the associated optimal objective value is $obj_{1} = q_{1}t_{e}\bar{\pi}+(1-q_{1})t_{e}E[\pi|\pi\leq \pi_1]$.  

Now, we show that we can achieve a lower value by employing a strategy different to the above one. Note that since $q_1>1/2$. Thus, the value of $obj_1$ is at least $t_e\bar{\pi}/2$.

Thus, there exists a solution $q\leq 1/2$ such that $qt_e\bar{\pi}+(1-q)t_e\bar{\pi}/2=obj_1$. Now, consider the strategy $q^{*}=q-\epsilon$, and the bidding price $p=\bar{\pi}$. Since $p=\bar{\pi}$, thus, $t_n=0$.  Now consider $q=q_1-\dfrac{t_s}{t_e}$. Since $t_e/2<t_s<t_e$, thus, $0<q<1/2$. The bid price be $\bar{\pi}$. The above strategy satisfies all the constraints. The objective value is thus at most 
$obj_2=(q-\epsilon)t_e\bar{\pi}+(1-q+\epsilon)t_e\bar{\pi}/2$ which is less than $obj_{1}$. Hence, we obtain a lower value by  employing a different strategy. Hence, the strategy is not optimal.

Therefore, $q^{*}\leq \frac{1}{2}$. 

Now, we show that $F_{\pi}(p)\geq 1/2$ for an optimal bidding. From constraint (\ref{P1-1})
\begin{equation}\label{cc1}
\begin{split}
&(1-q^{*})t_{e} \leq \frac{t_{k}}{1-F_{\pi}(p^{*})} \\
\iff & (1-q^{*})t_{e}F_{\pi}(p^{*}) \geq (1-q^{*})t_{e}-t_{k}\\
\iff & t_{k}(1-F_{\pi}(p^{*})) + (1-q^{*})t_{e}F_{\pi}(p^{*}) \geq (1-q^{*})t_{e} - t_{k} F_{\pi}(p^{*})
\end{split}
\end{equation}

From constraint (\ref{P1-3})
\begin{equation}\label{cc2}
\begin{split}
& t_{k}(\frac{1}{F_{\pi}(p^{*})}-1) +(1-q^{*})t_{e} \leq t_{s}  \\
\iff & t_{k}(1-F_{\pi}(p^{*})) + (1-q^{*})t_{e}F_{\pi}(p^{*}) \leq t_{s}F_{\pi}(p^{*})
\end{split}
\end{equation}

According to (\ref{cc1}) and (\ref{cc2}), we can get that 
\begin{equation}
\begin{split}
(t_{s}+t_{k})F_{\pi}(p^{*}) \geq (1-q^{*})t_{e} \geq \frac{t_{e}}{2}
\end{split}
\end{equation}

Because $t_{s}+t_{k} \leq t_{e}$, we can get $F_{\pi}(p^{*}) \geq \frac{1}{2}$.



\section{Proof of Proposition \ref{pro1}}\label{apdxpro1}

By taking the first-order derivative of $\Phi_{1}(p,q)$ over $q$, we have

\begin{equation}\label{mono}
 \frac{\partial \Phi(p,q)}{\partial q} = t_{e}(\bar{\pi} - \frac{\int_{\underline{\pi}}^{p}x f_{\pi}(x)dx }{F_{\pi}(p)}) \geq 0.
\end{equation}

\begin{color}{black} Therefore, $\Phi_{1}(p,q)$ increases monotonically with $q$, the user can minimize his expected total cot by choosing the smallest possible $q$ in the feasible set. First we consider constraints (2) and (4), we get $q^{*} = \max \{ 1-\frac{t_{k}}{t_{e}(1-F_{\pi}(p))}, 1- \frac{  t_{s}-t_{k}( \frac{1}{F_{\pi}(p^{*})} -1  )   }{t_{e}} \}$, then we will go back to check constraints (7) and (9). \end{color}

When
\begin{equation}\label{condition}
1-\frac{t_{k}}{t_{e}(1-F_{\pi}(p))} \leq 1- \frac{  t_{s}-t_{k}( \frac{1}{F_{\pi}(p^{*})} -1  )   }{t_{e}}
\end{equation}

$q^{*} = 1- \frac{  t_{s}-t_{k}( \frac{1}{F_{\pi}(p^{*})} -1  )   }{t_{e}}$. 


Then we substitute $q$ in (P1) with $q^{*}$, (P1)  becomes

\begin{align}
\begin{split}
{\text{(P1') \quad   min}} &
\quad G(p) =  [t_{s}-t_{k}(\frac{1}{F_{\pi}(p)}-1)][ \frac{  \int_{\underline{\pi}}^{p}x f_{\pi}(x)dx  }{  F_{\pi}(p) } - \bar{\pi}     ]\\
& + t_{e} \bar{\pi}
  \end{split}
\label{obj11} 
\\[2ex]
\text{subject to}\qquad & t_{s} - t_{k}(\frac{1}{F_{\pi}(p)} -1)\leq \frac{t_{k}}{1-F_{\pi}(p)} 
\label{P1-111} 
\\
& 1-\frac{t_{s}}{t_{e}}+\frac{t_{k}}{t_{e}}(\frac{1}{F_{\pi}(p)} -1) \leq \frac{t_{s}}{t_{e}} 
\label{P1-222}
\\
& \underline{\pi} \leq  p \leq \bar{\pi}\\
& 0 \leq 1-\frac{t_{s}}{t_{e}}+\frac{t_{k}}{t_{e}}(\frac{1}{F_{\pi}(p)} -1)  \leq 1 \label{P1-333}
\end{align}

In terms of constraint (\ref{P1-222}), 

\begin{equation}\label{P1-2122}
\begin{split}
& 1- \frac{  t_{s}-t_{k}( \frac{1}{F_{\pi}(p^{*})} -1  )   }{t_{e}} \leq \frac{t_{s}}{t_{e}} \\
\iff & \frac{t_{k}}{t_{k}-t_{e}+2t_{s}} \leq F_{\pi}(p^{*})\\
\iff & \frac{t_{k}}{2t_{s}-t_{e}+t_{k}} \leq \frac{t_{k}}{t_{k}+t_{k}} = \frac{1}{2} \leq F_{\pi}(p^{*}) 
\end{split}
\end{equation}

which has been proved in claim \ref{cla1}. Thus, $$q^{*}=1- \frac{  t_{s}-t_{k}( \frac{1}{F_{\pi}(p^{*})} -1  )   }{t_{e}} \leq \frac{t_{s}}{t_{e}}\leq 1.$$

Also $$q^{*}=1-\frac{t_{s}}{t_{e}}+\frac{t_{k}}{t_{e}}(\frac{1}{F_{\pi}(p^{*})} -1) \geq 1-\frac{t_{s}}{t_{e}} \geq 0$$ because $0 \leq F_{\pi}(p^{*}) \leq 1$. 

Thus constraints (\ref{P1-222}) and (\ref{P1-333}) hold when the optimization model (P1') takes the optimal value. 

Next, we will consider objective function $G(p)$ in (P1'). 

\begin{equation}
\begin{split}
\frac{\partial G(p)}{\partial p}&=\frac{t_{k}f_{\pi}(p)}{F_{\pi}(p)^{2}}[ \frac{  \int_{\underline{\pi}}^{p}x f_{\pi}(x)dx  }{  F_{\pi}(p) } - \bar{\pi} ] \\
&+ \frac{f_{\pi}(p)[pF_{\pi}(p)-\int_{\underline{\pi}}^{p}xf_{\pi}(x)dx]}{F_{\pi}^{2}(p)}[t_{s}-t_{k}(\frac{1}{F_{\pi}(p)}-1)]\\
&=\frac{f_{\pi}(p)}{F_{\pi}(p)^{2}}[  \frac{  \int_{\underline{\pi}}^{p}x f_{\pi}(x)dx  }{  F_{\pi}(p) }t_{k} - \bar{\pi}t_{k} \\
& + (pF_{\pi}(p)-\int_{\underline{\pi}}^{p}xf_{\pi}(x)dx) (t_{s}-t_{k}(\frac{1}{F_{\pi}(p)}-1))    ]
\end{split}
\end{equation}

Suppose $g(p) =  \frac{  \int_{\underline{\pi}}^{p}x f_{\pi}(x)dx  }{  F_{\pi}(p) }t_{k} - \bar{\pi}t_{k}  + (pF_{\pi}(p)-\int_{\underline{\pi}}^{p}xf_{\pi}(x)dx) (t_{s}-t_{k}(\frac{1}{F_{\pi}(p)}-1)) $

\begin{equation}
\begin{split}
&\frac{\partial g(p)}{\partial p} \\
&= \frac{2t_{k}f_{\pi}(p)}{F_{\pi}(p)} ( p - \frac{\int_{\underline{\pi}}^{p}xf_{\pi}(x)dx}{F_{\pi}(p)}   )+ F_{\pi}(p)(t_{s}-t_{k}(\frac{1}{F_{\pi}(p)}-1)) \\
\end{split}
\end{equation}

It is clear that $p  \geq \frac{\int_{\underline{\pi}}^{p}xf_{\pi}(x)dx}{F_{\pi}(p)} $, and since $F_{\pi}(p*)\geq \frac{1}{2}$, $t_{s}-t_{k}(\frac{1}{F_{\pi}(p)}-1) \geq t_{s}-t_{k} \geq 0$, $\frac{\partial g(p)}{\partial p} \geq 0$. Thus $g(p)$ monotonically increases with $p$.
Because $\lim_{p \to \underline{\pi} }g(p) = t_{k}\underline{\pi}-t_{k}\bar{\pi} <0 $, and $g(\bar{\pi})=( \bar{\pi}-\int_{\underline{\pi}}^{\bar{\pi}}xf_{\pi}(x)dx )(t_{s}-t_{k})>0$, $g(p)$ increases monotonically from a negative value to a positive value. The term $\frac{t_{k}f_{\pi}(p)}{F_{\pi}(p)^{2}}> 0$ in $\frac{\partial G(p)}{\partial p}$, thus $\frac{\partial G(p)}{\partial p}$ also increases monotonically from a negative value to a positive value, i.e., $G(p)$ first decreases and then increases with $p$. Thus $G(p)$ is minimized when $\frac{\partial G(p)}{\partial p}=0$. Letting $\frac{\partial G(p)}{\partial p} = 0$, we thus deduce 

\begin{equation}
\begin{split}
&\psi_{1}(p) \\
&=\frac{ 2t_{k} \int_{\underline{\pi}}^{p}x f_{\pi}(x)dx  }{  F_{\pi}(p) } + 2pt_{s}F_{\pi}(p) - pt_{s} - (t_{s}+t_{k}) \int_{\underline{\pi}}^{p}x f_{\pi}(x)dx\\
&=  t_{k}\bar{\pi}
\end{split}
\end{equation}

Thus the objective function $G(p)$ takes the optimal value at $$\hat{p_{1}}^{*} = \psi_{1}^{-1} (t_{k}\bar{\pi})$$ regardless of other constraints. 

Further, we will analyze constraint (\ref{P1-111}), which is equivalent to 

\begin{equation}\label{con111}
\psi_{2}(p) = (t_{s}+t_{k})F_{\pi}(p) - (t_{s}+t_{k})F_{\pi}(p)^{2} - t_{k} \leq 0
\end{equation}

Because

\begin{equation}\label{const}
\frac{\partial \psi_{2}(p)}{\partial p} = (t_{s}+t_{k})f_{\pi}(p)(1-2F_{\pi}(p))\leq 0
\end{equation}
$\psi_{2}(p)$ is monotone decreasing with $p$.

 Also because $\lim_{p: F_{\pi}(p) \to \frac{1}{2}} \psi_{2}(p) \geq 0$ and $\psi_{2}(  \bar{\pi}) \leq 0$, there exists one and only one $\hat{p_{2}}^{*} $ such that $\psi_{2}(\hat{p_{2}}^{*} ) = 0$, so constraint (\ref{P1-111}) ( or (\ref{con111}))is equivalent to

\begin{equation}\label{con1111}\nonumber
p \geq \hat{p_{2}}^{*}
\end{equation}

In order to get the optimal solution for (P1'), we need to take the maximum of $ \hat{p_{1}}^{*}$ and $ \hat{p_{2}}^{*}$, i.e., $p^{*}= \max \{  \hat{p_{1}}^{*},  \hat{p_{2}}^{*}   \}= \max \{\psi_{1}^{-1} (t_{k}\bar{\pi}), \psi_{2}^{-1} (0)\} $. 

Finally we will go back to check whether condition (\ref{condition}) can be satisfied with $p^{*} = \max \{\psi_{1}^{-1} (t_{k}\bar{\pi}), \psi_{2}^{-1} (0)\} $. 

We need to consider the following two cases:

Case 1: when $\psi_{2}^{-1} (0) \geq \psi_{1}^{-1} (t_{k}\bar{\pi})$, $p^{*} = \psi_{2}^{-1} (0)$, that is $(t_{s}+t_{k})F_{\pi}(p^{*}) - (t_{s}+t_{k})F_{\pi}(p^{*})^{2} - t_{k}=0$, which is equivalent to $1-\frac{t_{k}}{t_{e}(1-F_{\pi}(p))} = 1- \frac{  t_{s}-t_{k}( \frac{1}{F_{\pi}(p^{*})} -1  )   }{t_{e}}$. Thus the condition (\ref{condition}) holds in case 1. 

Case 2: when $\psi_{1}^{-1} (t_{k}\bar{\pi}) \geq \psi_{2}^{-1} (0)$, $p^{*}= \psi_{1}^{-1} (t_{k}\bar{\pi}) $. According to (\ref{const}), we know that $\psi_{2}(p)$ is monotone decreasing with $p$. Thus  $\psi_{2}(\psi_{1}^{-1} (t_{k}\bar{\pi})) <  \psi_{2}(\psi_{2}^{-1} (0))=0$, which is equivalent to $1-\frac{t_{k}}{t_{e}(1-F_{\pi}(p))} < 1- \frac{  t_{s}-t_{k}( \frac{1}{F_{\pi}(p^{*})} -1  )   }{t_{e}}$. Therefore, the condition (\ref{condition}) also holds in case 2.

\begin{color}{black}
Then we will consider when
\begin{equation}\label{condition}
1-\frac{t_{k}}{t_{e}(1-F_{\pi}(p))} \geq 1- \frac{  t_{s}-t_{k}( \frac{1}{F_{\pi}(p^{*})} -1  )   }{t_{e}}
\end{equation}

$q^{*} = 1-\frac{t_{k}}{t_{e}(1-F_{\pi}(p))}$

Then we substitute $q$ in (P1) with $q^{*}$, (P1) becomes

\begin{align}
\begin{split}
{\text{(P1'') \quad   min}} &
\quad G_{2}(p) = (1-\frac{t_{k}}{t_{e}(1-F_{\pi}(p))})t_{e}\bar{\pi} + \frac{  t_{k}  \int_{\underline{\pi}}^{p}x f_{\pi}(x)dx                   }{ F_{\pi}(p)( 1-F_{\pi}(p)  ) }
  \end{split}
\label{obj111} 
\\[2ex]
\text{subject to}\qquad & 1-\frac{t_{k}}{t_{e}(1-F_{\pi}(p))} \geq 1- \frac{  t_{s}-t_{k}( \frac{1}{F_{\pi}(p^{*})} -1  )   }{t_{e}}
\label{P11-111} 
\\
& 1-\frac{t_{k}}{t_{e}(1-F_{\pi}(p))} \leq \frac{t_{s}}{t_{e}} 
\label{P11-222}
\\
& \underline{\pi} \leq  p \leq \bar{\pi}\\
& 0 \leq 1-\frac{t_{k}}{t_{e}(1-F_{\pi}(p))}  \leq 1 \label{P11-333}
\end{align}

Next, we take the first derivative of the objective function in (P1''), 
\begin{equation}
\begin{split}
&\frac{\partial G_{2}(p)}{\partial p} \\
&=\frac{f_{\pi}(p)t_{k} [  -\bar{\pi} F_{\pi}(p)^{2}  +pF_{\pi}(p)(1-F_{\pi}(p)) -(1-2F_{\pi}(p)) \int_{\underline{\pi}}^{p}x f_{\pi}(x)dx ]   }{F_{\pi}(p)^{2}(1-F_{\pi}(p))^{2}}
\end{split}
\end{equation}

Suppose $g_{2}(p) =   -\bar{\pi} F_{\pi}(p)^{2}  +pF_{\pi}(p)(1-F_{\pi}(p)) -(1-2F_{\pi}(p)) \int_{\underline{\pi}}^{p}x f_{\pi}(x)dx  $, 
Now, we will prove $g_{2}(p) \leq 0$
\begin{equation}
\begin{split}
&pF_{\pi}(p) (1-F_{\pi}(p))+ (2F_{\pi}(p) - 1)\int_{\underline{\pi}}^{p}x f_{\pi}(x)dx\\
=& F_{\pi}(p)[ p(1-F_{\pi}(p)) + (2F_{\pi}(p) - 1 ) \frac{  \int_{\underline{\pi}}^{p}x f_{\pi}(x)dx    }{ F_{\pi}(p) }]\\
\leq & F_{\pi}(p)[ \bar{\pi} (1-F_{\pi}(p))  +    \bar{\pi}( 2F_{\pi}-1  )   ]\\
= & \bar{\pi}F_{\pi}(p)^{2}
\end{split}
\end{equation}

Thus, $g_{2}(p) = pF_{\pi}(p) (1-F_{\pi}(p))+ (2F_{\pi}(p) - 1)\int_{\underline{\pi}}^{p}x f_{\pi}(x)dx -  \bar{\pi}F_{\pi}(p)^{2} \leq 0$
%
%
%
Then we can get $\frac{\partial G_{2}(p)}{\partial p} \leq 0 $, thus $G_{2}(p)$ decreases monotonically with $p$, and the optimal solution is the largest possible $p$ in its feasible set. 

Because $ 1- \frac{  t_{s}-t_{k}( \frac{1}{F_{\pi}(p^{*})} -1  )   }{t_{e}} \geq 1-\frac{t_{s}}{t_{e}} > 0$. Thus, if we can meet constraint (\ref{P11-111}), constraint (\ref{P11-333}) will be met. 

As we analyzed in the last part, constraint (\ref{P11-111}) is equivalent to \begin{equation}\label{con111}
\psi_{2}(p) = (t_{s}+t_{k})F_{\pi}(p) - (t_{s}+t_{k})F_{\pi}(p)^{2} - t_{k} \leq 0
\end{equation}
which  is monotone decreasing with $p$. Thus the largest feasible solution will be obtained when $\psi_{2}(p) = 0$. Therefore, the optimal solution $p^{*} = \psi_{2}^{-1}(0)$. 

In summary, 

\begin{itemize}
\item When $1-\frac{t_{k}}{t_{e}(1-F_{\pi}(p))} \leq 1- \frac{  t_{s}-t_{k}( \frac{1}{F_{\pi}(p^{*})} -1  )   }{t_{e}}$, 

$p^{*} = \max \{\psi_{1}^{-1} (t_{k}\bar{\pi}), \psi_{2}^{-1} (0)\} $, $q^{*} =1- \frac{  t_{s}-t_{k}( \frac{1}{F_{\pi}(p^{*})} -1  )   }{t_{e}}$;

\item When $ 1-\frac{t_{k}}{t_{e}(1-F_{\pi}(p))} \geq 1- \frac{  t_{s}-t_{k}( \frac{1}{F_{\pi}(p^{*})} -1  )   }{t_{e}} $, 

$p^{*} = \psi_{2}^{-1}(0)$, $q^{*} = 1-\frac{t_{k}}{t_{e}(1-F_{\pi}(p))} = 1- \frac{  t_{s}-t_{k}( \frac{1}{F_{\pi}(p^{*})} -1  )   }{t_{e}}$
\end{itemize}

We combine the above two cases: $p^{*} = \max \{\psi_{1}^{-1} (t_{k}\bar{\pi}), \psi_{2}^{-1} (0)\} $, $q^{*} =1- \frac{  t_{s}-t_{k}( \frac{1}{F_{\pi}(p^{*})} -1  )   }{t_{e}}$. Thus Proposition \ref{pro1} is proved. 
\end{color}


\section{Proof of Proposition \ref{pro2}}\label{apdxpro2}

Recalling (\ref{mono}) in Proposition \ref{pro1}, we know that $\Phi_{1}(p,q)$ increases monotonically with $q$, the user can minimize his expected total cost by choosing the smallest possible $q$ in the feasible set. That is, $q^{*} = \max \{ 1-\frac{t_{k}}{t_{e}(1-F_{\pi}(p))}, 1- \frac{  t_{s}-t_{k}( \frac{1}{F_{\pi}(p^{*})} -1  )   }{t_{e}}, 0 \}$. 

When 
\begin{equation}\label{cond1}
1-\frac{t_{k}}{t_{e}(1-F_{\pi}(p))}\leq 0
\end{equation}
and 
\begin{equation}\label{cond2}
1- \frac{  t_{s}-t_{k}( \frac{1}{F_{\pi}(p^{*})} -1  )   }{t_{e}} \leq 0
\end{equation}

$q^{*}=0$

When $q=0$, the objective function of (P1) becomes $$G(p) = \frac{t_{e}  \int_{\underline{\pi}}^{p}x f_{\pi}(x)dx}{F_{\pi}(p)}$$
Then we take the first derivative of $G(p)$ and get $$  \frac{\partial G(p)}{\partial p} =t_{e}f_{\pi}(p)(  \frac{p - \frac{\int_{\underline{\pi}}^{p}xf_{\pi}(x)dx}{F_{\pi}(p)}       }{  F_{\pi}(p)  }   ) \geq 0, $$ which means the $G(p)$ monotonic increasing with $p$.  Minimizing $G(p)$ is equivalent to finding the minimum $p$ in its feasible set. Suppose $t_{s}-t_{e}\geq t_{k}$ and $t_{e}\geq 2t_{k}$, then
$p^{*} = \max\{  F_{\pi}^{-1}(1-\frac{t_{k}}{t_{e}}), F_{\pi}^{-1}(\frac{t_{k}}{t_{s}-t_{e}+t_{k}}), \underline{\pi} \}= F_{\pi}^{-1}(1-\frac{t_{k}}{t_{e}})$, which is equivalent to $$1-\frac{t_{k}}{t_{e}(1-F_{\pi}(p))}=0,$$ thus condition (\ref{cond1}) is satisfied. 

Next we will check whether condition (\ref{cond2}) will be satisfied. 

\begin{equation}\nonumber
\begin{split}
1- \frac{  t_{s}-t_{k}( \frac{1}{F_{\pi}(p^{*})} -1  )   }{t_{e}} &  \leq 1- \frac{t_{e}+t_{k}}{t_{e}} + \frac{t_{k}}{t_{e}}(\frac{t_{k}}{t_{e}}-1 )    \\
&= \frac{t_{k}(2t_{k}-t_{e}) }{t_{e}(t_{e}-t_{k})} \\
&\leq 0
\end{split}
\end{equation}

thus condition (\ref{cond2}) is also satisfied. This proves the result as in the statement of Proposition \ref{pro2}.

\section{Proof of Claim \ref{cla2}}\label{claim2}

Recall that with bid price $p$ and deadline $t_{s}$, $F_{\pi}(p)$ denotes the probability that the bid price $p \geq \pi(t)$, the spot price, the job's expected running time on spot instance is $t_{s}F_{\pi}(p)$. In order to guarantee the job can be finished before deadline, $t_{s}F_{\pi}(p^{*}) \geq (1-q^{*})t_{e}$, 

\begin{equation}
\begin{split}
& t_{s}F_{\pi}(p^{*}) \geq (1-q^{*})t_{e} \\
\iff & F_{\pi}(p^{*}) \geq (1-q^{*})\frac{t_{e}}{t_{s}}\\
\iff & F_{\pi}(p^{*}) \geq 1-q^{*}\\
\iff & F_{\pi}(p^{*}) \geq \frac{1}{2}. 
\end{split}
\end{equation}

\section{Proof of Proposition \ref{pro3}}\label{apdxpro3}

When $\frac{t_{e}}{2} < t_{s} < t_{e}$, we take the first-order derivative of $G(p,q)$ over $q$ and get

\begin{equation}
\begin{split}
\frac{\partial \Phi_{3}(p,q)}{\partial q}  &= t_{e}( \bar{\pi} -\frac{ 1 }{1-  \frac{t_{r}}{t_{k}} (1-F_{\pi}(p) )   } \frac{ \int_{\underline{\pi}}^{p}x f_{\pi}(x)dx}{F_{\pi}(p)}   )\\
& \geq t_{e}(\bar{\pi} -  \frac{ 2\int_{\underline{\pi}}^{p}x f_{\pi}(x)dx}{F_{\pi}(p)}     )
\end{split}
\end{equation}
Suppose $g(p) = \bar{\pi} -  \frac{ 2\int_{\underline{\pi}}^{p}x f_{\pi}(x)dx}{F_{\pi}(p)}$, and take the first-order of derivative of $g(p)$, we get 

\begin{equation}\nonumber
\frac{\partial g(p)}{\partial p} = -\frac{2f_{\pi}(p)}{F_{\pi}(p)^{2}}(pF_{\pi}(p) -  \int_{\underline{\pi}}^{p}x f_{\pi}(x)dx  ) \leq 0
\end{equation}
thus $g(p)$ monotonically decrease with $p$. The minimum value of $g(p)$ is $g_{min}(p) = g(\bar{\pi}) = \bar{\pi} -  \frac{ 2\int_{\underline{\pi}}^{\bar{\pi}}x f_{\pi}(x)dx}{F_{\pi}(p)} \geq 0$. Then we can get

\begin{equation}\label{phi3mono}
\frac{\partial \Phi_{3}(p,q)}{\partial q} \geq 0. 
 \end{equation}

 Therefore, $\Phi_{3}(p,q)$ monotonically increases with $q$. Minimizing $\Phi_{3}(p,q)$ is equivalent to finding the minimum $q$ in its feasible set, that is, 

\begin{equation}\nonumber
\begin{split}
q^{*}  =\max \{1-\frac{t_{s}F_{\pi}(p)( 1-\frac{t_{r}}{t_{k}}(1-F_{\pi}(p))   )  }{t_{e}}, 0  \}
\end{split}
\end{equation}

Because 
\begin{equation}\label{condd}
 1-\frac{t_{s}F_{\pi}(p)( 1-\frac{t_{r}}{t_{k}}(1-F_{\pi}(p))   )  }{t_{e}} \geq 1-\frac{t_{s}}{t_{e}} \geq 0,
 \end{equation}
 
\begin{equation}\label{P3-q}
 q^{*} = 1-\frac{t_{s}F_{\pi}(p)( 1-\frac{t_{r}}{t_{k}}(1-F_{\pi}(p))   )  }{t_{e}}
\end{equation}

Then substitute $q$ using $q^{*}$ in (P3), we will get a new optimization problem (P3')

\begin{align}
\begin{split}
{\text{(P3') \quad   min}} &
\quad  G(p) = t_{e}\bar{\pi}  - t_{s}F_{\pi}(p)\bar{\pi}[ 1-\frac{t_{r}}{t_{k}}(1-F_{\pi}(p))] \\
&+t_{s} \int_{\underline{\pi}}^{p}f_{\pi}(x)dx
  \end{split}
\label{green} 
\\
\text{s.t.}\qquad &  (1-\frac{t_{r}+t_{s}F_{\pi}(p)( 1-\frac{t_{r}}{t_{k}}(1-F_{\pi}(p))   )  }{t_{e}} ) t_{e} \leq t_{s}
\label{gc1}
\\
& t_{r} < \frac{t_{k}}{2(1-F_{\pi}(p))}
\label{gc2}
\\
& \underline{\pi} \leq  p \leq \bar{\pi}
\label{gc3}
\\
& 1-\frac{t_{s}F_{\pi}(p)( 1-\frac{t_{r}}{t_{k}}(1-F_{\pi}(p))   )  }{t_{e}}  \geq 0
\label{gc4}
\end{align}

Take the first-order derivative of $G(p)$, we have 
$$ \frac{\partial G(p)}{\partial p}= (p-\bar{\pi})t_{s}f_{\pi}(p) + \frac{t_{r}t_{s}}{t_{k}}\bar{\pi}f_{\pi}(p)(1-2F_{\pi}(p)) <0, $$ so $G(p)$ monotonic decreasing with $p.$ Thus, the optimal solution is $p^{*}= \bar{\pi}$. In addition, the constraints (\ref{gc1}) and (\ref{gc4}) are satisfied at optimality. 

Substitute $p$ with $p^{*}=\bar{\pi}$ in (\ref{P3-q}), we will get $ q^{*} = 1- \frac{t_{s}  }{t_{e}} \geq 0$. This proves the result given in the statement of Proposition \ref{pro3}.

\section{Proof of Proposition \ref{pro4}}\label{apdxpro4}

From (\ref{phi3mono}) we know that $\Phi_{3}(p,q)$ monotonically increases with $q$. 

\begin{equation}\nonumber
\begin{split}
q^{*}  =\max \{1-\frac{t_{s}F_{\pi}(p)( 1-\frac{t_{r}}{t_{k}}(1-F_{\pi}(p))   )  }{t_{e}}, 0  \}
\end{split}
\end{equation}

When 
\begin{equation}\label{cont}
1-\frac{t_{s}F_{\pi}(p)( 1-\frac{t_{r}}{t_{k}}(1-F_{\pi}(p))   )  }{t_{e}}\leq 0,
\end{equation}
$$q^{*}=0,$$

then optimization problem (P3) will become 

\begin{align}
\begin{split}
{\text{(P3'') \quad   min}}  \quad &  \Phi(p) = \frac{ t_{e} }{1-  \frac{t_{r}}{t_{k}} (1-F_{\pi}(p) )   } \frac{ \int_{\underline{\pi}}^{p}x f_{\pi}(x)dx}{F_{\pi}(p)}
\end{split}
\\
\text{subject to}&   \frac{t_{e}}{  1- \frac{t_{r}}{t_{k}}(1-F_{\pi}(p))   }\frac{1}{F_{\pi}(p)}  \leq t_{s} 
\label{P3111}
\\
& \underline{\pi} \leq  p \leq \bar{\pi} \label{ppp}
\end{align}

By taking the first derivative of $\Phi(p)$ in (P3''), we will have 

$$\frac{\partial \Phi(p)}{\partial p  } =  \frac{ (t_{e})f_{\pi}(p)( 1-\frac{t_{r}}{t_{k}}+2\frac{t_{r}}{t_{k}}F_{\pi}(p)  )   }{ h(p)^{2}   }g(p)    $$

where 
$$g(p)=-\int_{\underline{\pi}}^{p} x f_{\pi}(x)dx+p \frac{(1-\frac{t_{r}}{t_{k}})F_{\pi}(p)+\frac{t_{r}}{t_{k}}(F_{\pi}(p))^{2} }{1-\frac{t_{r}}{t_{k}}+2\frac{t_{r}}{t_{k}}F_{\pi}(p)}$$

and 
$$
h(p)=(1-\frac{t_{r}}{t_{k}})F_{\pi}(p)+\frac{t_{r}}{t_{k}}(F_{\pi}(p))^{2}
$$

Because $\frac{ (t_{e})f_{\pi}(p)( 1-\frac{t_{r}}{t_{k}}+2\frac{t_{r}}{t_{k}}F_{\pi}(p)  )   }{ h(p)^{2}   }>0$, in order to show the positivity of $\Phi(p)$, we take the first derivative of $g(p)$ and then we have 

$$\frac{\partial g(p)}{\partial p}=\frac{1-\frac{t_{r}}{t_{k}}+2 \frac{t_{r}}{t_{k}}(F_{\pi}(p)-pf_{\pi}(p))}{(1-\frac{t_{r}}{t_{k}}+2\frac{t_{r}}{t_{k}}F_{\pi}(p))^{2} }h(p)$$

Because $F_{\pi}(p)$ is concave and $F_{\pi}(p)-pf_{\pi}(p) \geq 0$, we have $\frac{\partial g(p)}{\partial p} \geq 0$. Thus, $g(p)$     monotonically increases with $p$. By the fact that $g(\underline{\pi})=0$, then $g(p) \geq 0$ and  $\frac{\partial \Phi(p)}{\partial p  } \geq 0 $, i.e., $\Phi(p)$ increases monotonically with $p$. 

In order to minimize the total cost, we just need to choose the lowest feasible bid price. Constraint (\ref{P3111}) in (P3'') is equivalent to 
\begin{equation}
g(p) =  \frac{t_{r}}{t_{k}}F_{\pi}(p)^{2} + (1- \frac{t_{r}}{t_{k}})F_{\pi}(p) \geq \frac{t_{e}}{t_{s}}.
\end{equation}

The axis of symmetry of $g(F_{\pi}(p))$ is $F_{\pi}(p) = -\frac{t_{k}-t_{r}}{2t_{r}} <0$, thus $g(F_{\pi}(p))$ monotonically increases with $F_{\pi}(p)$ on the condition that $0\leq F_{\pi}(p)\leq 1$ and $g(p)$ monotonically increases with $p$ on the condition that $\underline{\pi}\leq p\leq \bar{\pi}$. So the minimum value to satisfy constraint (\ref{P3111})  is $g(p^{*})=\frac{t_{e}}{t_{s}}$. Because $0 \leq \frac{t_{e}}{t_{s}} \leq 1$, the maximum and minimum value of $g(p)$ is $g_{max}(p) = g(\bar{\pi})=1 $ and $g_{min}(p) = g(\underline{\pi})=0$ respectively. Therefore, constraint (\ref{ppp}) is satisfied. $g(p^{*}) = \frac{t_{e}}{t_{s}}$ is equivalent to $1-\frac{t_{s}F_{\pi}(p^{*})( 1-\frac{t_{r}}{t_{k}}(1-F_{\pi}(p^{*}))   )  }{t_{e}}= 0$, thus the condition 
(\ref{cont}) is satisfied. This proves the result as in the statement of Proposition \ref{pro4}.

\begin{color}{black}
\section{Proof of Lemma \ref{lemmaaa}}\label{apdxlemmaaa}

Recall that when $t_{s}< t_{e} \leq 2t_{s}$, $F_\pi(p^{*}) \geq \frac{1}{2}$. The difference optimal portions of job to run on on-demand instance in Proposition 1 and Proposition 3 is 

\begin{equation}
\begin{split}
&1- \frac{  t_{s}-t_{k}( \frac{1}{F_{\pi}(p^{*})} -1  )   }{t_{e}} - (1-\frac{t_{s}}{t_{e}}) \\
=& \frac{t_{k}}{t_{e}}( \frac{1}{F_{\pi}(p^{*})} -1  ) \\
\leq & \frac{t_{k}}{t_{e}}
\end{split}
\end{equation}
Note the last step is because $F_\pi(p^{*}) \geq \frac{1}{2}$. 
\end{color}

\end{document}